\newcommand\blankfootnote[1]{%
\begin{NoHyper}
  \let\svthefootnote\thefootnote%
  \let\thefootnote\relax\footnotetext{#1}%
  \let\thefootnote\svthefootnote%
\end{NoHyper}
}
\newcommand{\Real}{\mathop{\mathrm{Re}}\nolimits}
\newcommand{\E}{{\mathbb{E}}}
\newcommand{\eps}{{\varepsilon}}        %%%%%%%%%%%%%%%%%%%%%%%%%%%%%%%%%%%
\newcommand{\fr}{{\mathcal F}}
\newcommand{\sign}{\operatorname{sign}}
\newcommand{\NN}{{\mathbb{N}}}
\newcommand{\complex}{\mathbb{C}}
\newcommand{\reals}{\mathbb{R}}
\newcommand{\Bs}{{\mathcal{B}}}
\newcommand{\Ss}{{\mathcal{S}}}
\newcommand{\subgaussnorm}[1]{{\tau\left({#1}\right)}}
\newcommand{\subexpnorm}[1]{{\theta\left({#1}\right)}}
\newtheorem{theorem}{Theorem}
\newtheorem{lemma}{Lemma}
\newtheorem{cor}{Corollary}
\newtheorem{remark}{Remark}
\newtheorem{definition}{Definition}
\newtheorem{example}{Example}
\title{Distributed Approximation of Functions over Fast Fading Channels with Applications to Distributed Learning and the Max-Consensus Problem}
\author{\IEEEauthorblockN{ Igor Bjelakovi\'c\IEEEauthorrefmark{1}\IEEEauthorrefmark{2}, %
    Matthias Frey\IEEEauthorrefmark{1}\IEEEauthorrefmark{2} and %
    S\l awomir~Sta\'{n}czak\IEEEauthorrefmark{2}\IEEEauthorrefmark{3}\\ %
    } %
  \IEEEauthorblockA{
  \IEEEauthorrefmark{2}Technische Universität Berlin, Germany
  and
  \IEEEauthorrefmark{3}Fraunhofer Heinrich Hertz Institute, Berlin,
   Germany
  }%
  \vspace*{-2em} }%%
\begin{document}
\maketitle
\blankfootnote{
This work was supported by the German Federal Ministry of Education and Research (BMBF) under grant 16KIS0605 and by the German Research Foundation (DFG) within their priority program SPP 1914 ``Cyber-Physical Networking''.

\IEEEauthorrefmark{1} The first two authors contributed equally to this work.
}
\begin{abstract}
  In this work, we consider the problem of distributed approximation
  of functions over multiple-access channels with additive noise. In
  contrast to previous works, we take fast fading into account and
  give explicit probability bounds for the approximation error
  allowing us to derive bounds on the number of channel uses that are
  needed to approximate a function up to a given approximation
  accuracy. Neither the fading nor the noise process is limited to
  Gaussian distributions. Instead, we consider sub-gaussian random
  variables which include Gaussian as well as many other distributions
  of practical relevance. The results are motivated by and have
  immediate applications to a) computing predictors in models for
  distributed machine learning and b) the max-consensus problem in
  ultra-dense networks.
\end{abstract}

\section{Introduction}
Massive wireless sensor networks with thousands of sensors are expected
to enable many important 5G applications including mobile health care,
environment monitoring, smart transportation and smart agriculture to
name a few. Efficient and reliable data collection in such massive
communication scenarios requires fundamentally new approaches to the
problem of massive access with many sensors competing for access to
scarce wireless resources.
An important entry point for improvements is that it is often not
necessary to reconstruct all the individual transmitted data, but
rather some function of them~\cite{nazer2007computation}. In addition,
many applications do not require the exact function value but can
instead work with a noisy version of it as long as the noise is bounded or
otherwise controlled. In this work, we focus on a particular class of
objective functions and propose a method for approximating them over
fast fading, noisy channels. One particular application we have in
mind is a distributed computation of the estimator function of machine
learning models. Considering the recent interest in machine learning,
this can be expected to become an increasingly important problem in
future wireless networks.

Generally, we expect functions of the form
\begin{equation*}
f(s_1, \ldots, s_K)= F\left(  \sum_{k=1}^K f_k(s_k)     \right ),
\end{equation*}
called \emph{nomographic functions},
to be amenable to distributed approximation over a wireless channel in
which a superposition of signals results in a noisy sum of the
transmitted signals to arrive at the receiver, and in fact it turns
out that every multivariate real function $f$ has such a
representation~\cite{buck1976approximate}. However, even extremely
weak noise in the individual components can have an unpredictable
impact on the overall error if such representations are
used. Therefore, it is necessary to introduce additional requirements
on $f_1, \dots, f_K$ and $F$. It is known~\cite{kolmogorov1957representation} that every continuous function mapping from $[0,1]^K$ to $\reals$ can be represented as a sum of no more than $2K+1$ nomographic functions with continuous representations. Another result worth noting in this context is that nomographic functions with continuous representations are nowhere dense in the space of continuos functions~\cite{buck1982nomographic} and thus the representation as a sum of nomographic functions is really necessary.
However, even with suitable continuous representations available, it is hard to control the impact of the channel noise. We therefore consider a different
class of functions and use $\fr_{\textrm{\textrm{mon}}}$ to denote
this class. Although there are functions of practical interest that
are not in $\fr_{\textrm{\textrm{mon}}}$, we observe that many important
functions belong to this class. We show how to approximate these
functions in a distributed fashion in a massive access scenario with
fast fading and additive noise. The fading and noise distributions are
assumed to be sub-gaussian, 
which includes Gaussian distributions as a special case as well as many other
practical distributions.

Distributed computation of functions has been introduced
in~\cite{nazer2007computation} with applications in network coding,
but in contrast to this approach of exactly and repeatedly computing instances of the
same discrete function with arguments drawn from a known random
distribution, we focus on approximate one-shot computation
of analog functions with arbitrary arguments. This means that we do
not have a computation rate, but instead an approximation error and an
associated number of channel uses which is uniform in the transmitted data, and can thus be arbitrary. An assumption that it follows a probability distribution is not necessary. We revisit the approach from~\cite{kiril}, modify it slightly, and provide a detailed
theoretical analysis of the approximation error. Other approaches to
and applications of the distributed approximation of nomographic
functions appeared
in~\cite{goldenbaum2013robust,goldenbaum2013harnessing,goldenbaum2014nomographic,goldenbaum2016harnessing}.

Our main contributions in this work are
\begin{enumerate}
 \item a detailed technical analysis of a method of distributed approximation of functions in $\fr_{\textrm{mon}}$ in a multiple-access setting with fast fading and additive noise,
 \item the treatment of sub-gaussian fading and noise, generalizing
   the Gaussian case so as to accommodate many fading and noise
   distributions that occur in practice,
 \item applications of these techniques to a subclass of machine learning models in Section~\ref{sec:ml} and to a highly scalable max-consensus protocol for ultra-dense networks in Section~\ref{sec:max-consensus}.
\end{enumerate}

\showto{conference}{Some proofs are omitted due to space constraints. They can be found, however, in the extended version of this work~\cite{arxivpaper}.}
%
%%%%%%%%%%%%%%%%%%%%%%%%%%%%%%%%%%%%%%%%%%%%%%%%%%%%%%%%%%%%%%%%%%%
%
%						NETWORK MODEL
%
%%%%%%%%%%%%%%%%%%%%%%%%%%%%%%%%%%%%%%%%%%%%%%%%%%%%%%%%%%%%%%%%%%%
%

%
%%%%%%%%%%%%%%%%%%%%%%%%%%%%%%%%%%%%%%%%%%%%%%%%%%%%%%%%%%%%%%%%%%%
%
%						---------------
%
%%%%%%%%%%%%%%%%%%%%%%%%%%%%%%%%%%%%%%%%%%%%%%%%%%%%%%%%%%%%%%%%%%%
%

%%%%%%%%%%%%%%%%%%%%%%%%%%%%%%%%%%%%%%%%%%%%%%%%%%%%%%%%%%%%%%%%%%%
%
%                                           SECTION: SYSTEM MODEL AND PROBLEM STATEMENT
%
%%%%%%%%%%%%%%%%%%%%%%%%%%%%%%%%%%%%%%%%%%%%%%%%%%%%%%%%%%%%%%%%%%%
\section{System Model and Problem Statement}     

%%%%%%%%%%%%%%%%%%%%%%%%%%%%%%%%%%%%%%%%%%%%%%%%%%%%%%%%%%%%%%%%%%
%
%                                             INTRO SUB-GAUSSIAN RANDOM VARIABLES
%
%%%%%%%%%%%%%%%%%%%%%%%%%%%%%%%%%%%%%%%%%%%%%%%%%%%%%%%%%%%%%%%%%%%
\subsection{Sub-Gaussian Random Variables}
We begin with a short overview of the relevant definitions and properties of sub-gaussian random variables.
\showto{arxiv}{More on this topic can be found in Section \ref{sec:sub-exp-sub-gauss} and in \cite{buldygin,wainwright,vershynin}.}
\showto{conference}{More on this topic can be found in \cite{arxivpaper,buldygin,wainwright,vershynin}.}

For a random variable $X$, we define\footnote{Note that other norms on the space of sub-gaussian random variables that appear in the literature are equivalent to $\subgaussnorm{\cdot}$ (see, e.g.,~\cite{buldygin}). The particular definition we choose here matters, however, because we want to derive results in which no unspecified constants appear.}
\begin{multline}\label{eq:sub-gauss-norm-def-first}
\subgaussnorm{X} :=  \inf \Big\{t > 0: \forall \lambda \in \reals \\ \mathbb{E}\exp \left( \lambda (X - \mathbb{E} X) \right)   \le \exp \left( \lambda^2 t^2 / 2 \right)      \Big\}.
\end{multline}
 $X$ is called a sub-gaussian random variable if $\subgaussnorm{X}<\infty $. The function $\subgaussnorm{\cdot}$ defines a semi-norm on the set of sub-gaussian random variables \cite[Theorem 1.1.2]{buldygin}, i.e., it is absolutely homogeneous, satisfies the triangle inequality, and is non-negative. $ \subgaussnorm{X}=0$ does not necessarily imply $X=0$ unless we identify random variables which are equal almost everywhere.
Examples of sub-gaussian random variables include Gaussian and bounded random variables.
%%%%%%%%%%%%%%%%%%%%%%%%%%%%%%%%%%%%%%%%%%%%%%%%%%%%%%%%%%%%%%%%%%%
%
%.                                          SYSTEM
%
%%%%%%%%%%%%%%%%%%%%%%%%%%%%%%%%%%%%%%%%%%%%%%%%%%%%%%%%%%%%%%%%%%%
\subsection{System Model}
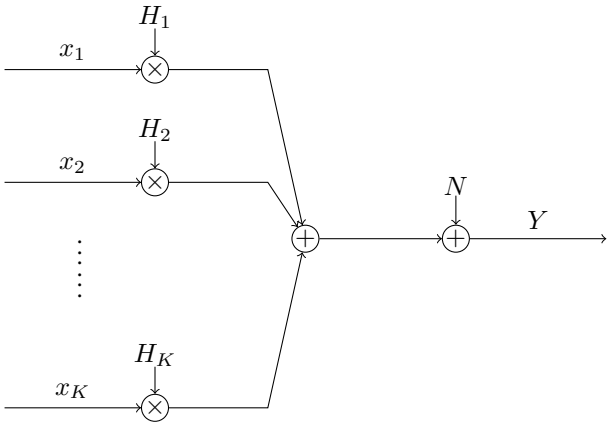
\begin{figure}
\begin{tikzpicture}
\coordinate                      (inK)          at (0,0);
\coordinate                      (in2)          at (0,3);
\coordinate                      (in1)          at (0,4.5);
\node[circle,inner sep=0pt,draw] (multK)        at (2,0)    {$\times$};
\node[circle,inner sep=0pt,draw] (mult2)        at (2,3)    {$\times$};
\node[circle,inner sep=0pt,draw] (mult1)        at (2,4.5)  {$\times$};
\node                            (vdots)        at (1,1.85) {\Shortstack{. . . . . .}};
\node[inner sep=0pt]             (fadingK)      at (2,.7)   {$H_K$};
\node[inner sep=0pt]             (fading2)      at (2,3.7)  {$H_2$};
\node[inner sep=0pt]             (fading1)      at (2,5.2)  {$H_1$};
\coordinate                      (cornerK)      at (3.5,0);
\coordinate                      (corner2)      at (3.5,3);
\coordinate                      (corner1)      at (3.5,4.5);
\node[circle,inner sep=0pt,draw] (plus)         at (4,2.25) {$+$};
\node[circle,inner sep=0pt,draw] (plusnoise)    at (6,2.25) {$+$};
\node[inner sep=0pt]             (noise)        at (6,2.95) {$N$};
\coordinate                      (out)          at (8,2.25);

\draw[->] (inK) -- (multK) node[midway,above] {$x_K$};
\draw[->] (in2) -- (mult2) node[midway,above] {$x_2$};
\draw[->] (in1) -- (mult1) node[midway,above] {$x_1$};

\draw[->] (fadingK) -- (multK);
\draw[->] (fading2) -- (mult2);
\draw[->] (fading1) -- (mult1);

\draw[->] (multK) -- (cornerK) -- (plus);
\draw[->] (mult2) -- (corner2) -- (plus);
\draw[->] (mult1) -- (corner1) -- (plus);

\draw[->] (plus) -- (plusnoise);
\draw[->] (noise) -- (plusnoise);

\draw[->] (plusnoise) -- (out) node[midway,above] {$Y$};
\end{tikzpicture}
\caption{Channel model.}
\label{fig:channel}
\end{figure}

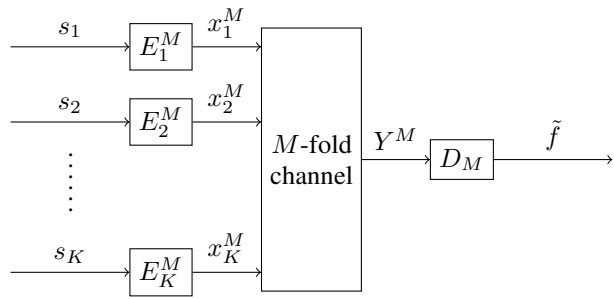
\begin{figure}
\begin{tikzpicture}
\coordinate                      (inK)          at (0,0);
\coordinate                      (in2)          at (0,2);
\coordinate                      (in1)          at (0,3);
\node[rectangle,draw]            (encK)         at (2,0) {$E_K^M$};
\node[rectangle,draw]            (enc2)         at (2,2) {$E_2^M$};
\node[rectangle,draw]            (enc1)         at (2,3) {$E_1^M$};
\node[rectangle,minimum height=3.5cm,align=center,draw] (channel) at (4,1.5) {$M$-fold\\channel};
\node[rectangle,draw]            (dec)          at (6,1.5) {$D_M$};
\coordinate                      (out)          at (8,1.5);
\node                            (vdots)        at (.8,1.2) {\Shortstack{. . . . . .}};

\draw[->] (inK) -- (encK) node[midway,above] {$s_K$};
\draw[->] (in2) -- (enc2) node[midway,above] {$s_2$};
\draw[->] (in1) -- (enc1) node[midway,above] {$s_1$};

\draw[->] (encK) -- (encK-|channel.west) node[midway,above] {$x_K^M$};
\draw[->] (enc2) -- (enc2-|channel.west) node[midway,above] {$x_2^M$};
\draw[->] (enc1) -- (enc1-|channel.west) node[midway,above] {$x_1^M$};

\draw[->] (channel) -- (dec) node[midway,above] {$Y^M$};
\draw[->] (dec) -- (out) node[midway,above] {$\tilde{f}$};
\end{tikzpicture}
\caption{System model.}
\label{fig:system}
\end{figure}

We consider the following channel model with $K$ transmitters and one receiver, depicted in Fig.~\ref{fig:channel}: For $m=1,\ldots, M$, the channel output at the $m$-th channel use is given by
\begin{equation}\label{eq:channel-model}
Y(m)=\sum_{k=1}^{K} H_k(m)x_k(m)+ N(m),
\end{equation}
where:
\begin{itemize}
\item $x_k(m)\in \complex$ are transmit symbols. We assume a peak power constraint $|x_k(m)  |^2 \le P$ for $k=1,\ldots, K$ and $m=1,\ldots, M$.
\item $H_k(m)$, $k=1\ldots, K$, $m=1,\ldots, M$, are independent complex-valued random variables such that for every $m=1, \ldots ,M$ and $k=1,\ldots,K$, the real part $H_k^r (m)$ and the imaginary part $H_k^i (m)$  of $H_k(m)$ are independent, sub-gaussian random variables
with mean zero and variance $1$. Moreover, we assume that there is a $\sigma_F\ge 0$ with
\begin{equation}\label{eq:bound-fading}
 \max\left( \tau (H^r_k(m)),  \tau (H^i_k(m))\right) \le \sigma_F
\end{equation}
for all $k=1, \ldots, K$ and $m=1,\ldots ,M$.
\item $N(m)$, $m=1,\ldots, M$, are independent complex-valued random variables. We assume that the real and imaginary parts $N^{r} (m),N^{i }(m)$ of $N(m)$  are independent sub-gaussian random variables  with mean zero for $m=1,\ldots, M$ and that there is a real number $\sigma_N\ge 0$ such that
\begin{equation}\label{eq:bound-noise}
 \max\left( \tau (N^{r}(m) ), \tau (N^{i}(m) ) \right) \le \sigma_N,
\end{equation}
for $m=1,\ldots, M$.
\item We assume that $N(m)$ and $H_k(m)$, $k=1,\ldots K$, $m=1,\ldots ,M$, are independent.
\end{itemize}
\begin{remark}
If for all $m$, $N^{r} (m),N^{i }(m)$ is distributed according to $\mathcal{N}(0, \sigma^2)$, i.e., $N(m)$ is a circularly symmetric gaussian random variable, then a direct computation of
the moment generating function in (\ref{eq:sub-gauss-norm-def-first}) shows that $\sigma_N$ in (\ref{eq:bound-noise}) can be replaced with the variance $\sigma$ of $N^{r} (m),N^{i }(m) $.

The same holds for the fading random variables $H_k^r (m)$ and $H_k^i (m)$. If they follow the standard normal distribution, $\sigma_F$ in (\ref{eq:bound-fading}) can be replaced with $1$.
\end{remark}
%%%%%%%%%%%%%%%%%%%%%%%%%%%%%%%%%%%%%%%%%%%%%%%%%%%%%%%%%%%%%%%%%%%%
%
%                                                  PRE- AND POSTPROCESSING: DEFINITIONS
%
%%%%%%%%%%%%%%%%%%%%%%%%%%%%%%%%%%%%%%%%%%%%%%%%%%%%%%%%%%%%%%%%%%%
\subsection{Distributed Approximation of Functions}
Our goal is to approximate functions  $f: \Ss_1\times \ldots \times \Ss_K\to\reals$  in a distributed setting. The sets $\Ss_1,\ldots \Ss_K\subseteq \reals $ are assumed to be closed and endowed with their natural Borel $\sigma$-algebras 
$\Bs(\Ss_1),\ldots ,\Bs(\Ss_K)$, and we consider the  product $\sigma$-algebra $\Bs (\Ss_1)\otimes \ldots \otimes \Bs(\Ss_K)$ on the set $ \Ss_1\times \ldots \times \Ss_K $. Furthermore, the functions  $f: \Ss_1\times \ldots \times \Ss_K\to\reals$ under consideration are assumed to be measurable in what follows.

An admissible distributed function approximation scheme for $f: \Ss_1\times \ldots \times \Ss_K\to\reals$ for $M$ channel uses, depicted in Fig.~\ref{fig:system}, is a pair $(E^M, D^M)$, consisting of:

\begin{enumerate}
\item A pre-processing function $E^M = (E_1^M, \dots, E_K^M)$, where each $E_k^M$ is of the form
\[E_k^M(s_k)=(x_k(m, s_k, U_k(m) ))_{m=1}^{M}\in \complex^{M} \]
with random variables $U_k(1), \ldots, U_k(M)$ and a measurable map 
\[(s_k, t_1, \ldots, t_M)\mapsto  (x_k(m, s_k,  t_m ))_{m=1}^{M}\in \complex^{M}.\]
The encoder $E_k^M$ is subject to the peak power constraint $ |x_k(m, s_k, U_k(m))  |^2 \le P$ for all $k=1,\ldots , K$ and $m=1,\ldots , M$.
\item A post-processing function $D^M$: The receiver is allowed to apply a measurable recovery function $D^M: \complex ^M\to \reals$ upon observing the output of the channel.
\end{enumerate}
So in order to approximate $f$, the transmitters apply their pre-processing maps to
\[(s_1,\ldots, s_K)\in \Ss_1\times \ldots \times \Ss_K\]
resulting in $E_1^M (s_1), \ldots, E_K^M(s_K)$ which are sent over the channel. 
The receiver observes the output of the channel and applies the recovery map $D^M$. The whole process defines an estimate $\tilde{f}$ of $f$.

Let $\varepsilon,\delta \in (0,1)$ and $f: \Ss_1\times \ldots \times \Ss_K\to\reals $ be given. We say that $f$ is $\varepsilon$-approximated after $M$ channel uses with confidence level $\delta$ if there is an approximation scheme
$(E^M,D^M)$ such that the resulting estimate $\tilde{f}$ of $f$ satisfies
\begin{equation}\label{eq:eps-delta-approx}
\mathbb{P}( |    \tilde{f} (s^K)- f(s^K)       |\ge \eps      )\le \delta
\end{equation}
for all $s^K:= (s_1, \ldots , s_K) \in \Ss_1\times \ldots \times \Ss_K $.
Let $M(f, \varepsilon, \delta)$ denote the smallest non-negative integer such that there is an approximation scheme $(E^M, D^M)$ for $f$ satisfying (\ref{eq:eps-delta-approx}). We call $M(f, \varepsilon, \delta)$ the communication cost for approximating a function $f$
with accuracy $\varepsilon$ and confidence $\delta$.

%%%%%%%%%%%%%%%%%%%%%%%%%%%%%%%%%%%%%%%%%%%%%%%%%%%%%%%%%%%%%%%%%%%
%
%						CLASS OF FUNCTIONS
%
%%%%%%%%%%%%%%%%%%%%%%%%%%%%%%%%%%%%%%%%%%%%%%%%%%%%%%%%%%%%%%%%%%%
%
\subsection{The class of functions to be approximated}
We set for $k=1,\ldots , K$
\begin{equation}
\fr_{k, \infty}:= \{ f:\Ss_k \to \reals : f \textrm{ is measurable and bounded}     \}.
\end{equation}
A measurable function  $f:\Ss_1\times \ldots \times \Ss_K \to \reals$ is called a \emph{generalized linear function} if there are bounded measurable functions $f_k \in \fr_{k,\infty}$,
$k=1, \ldots , K$, with
\begin{equation}
f(s_1, \ldots , s_K)= \sum_{k=1}^K f_k(s_k),
\end{equation}
for all $(s_1,\ldots , s_K)\in \Ss_1 \times \ldots \times \Ss_K$. The set of generalized linear functions from $\Ss_1 \times \ldots \times \Ss_K\to \reals$ is denoted by $\fr_{K, \textrm{lin}}$.
Our main object of interest will be the following class of functions.
%%%%%%%%%%%%%%%%%%%%%%%%%%%%%%%%%%%%%%%%%%%
%
%                             DEFINITION:
%
%%%%%%%%%%%%%%%%%%%%%%%%%%%%%%%%%%%%%%%%%%%%
\begin{definition}
\label{def:Fmon}
A measurable function $f: \Ss_1\times \ldots \times \Ss_K\to \reals$ is said to belong to $\fr_{\textrm{\textrm{mon}}}$ if there exist
$f_k\in \fr_{k, \infty}$, $k=1,\ldots , K$, a measurable set $D\subseteq \reals$ with the property $f_1(\Ss_1)+\ldots + f_K(\Ss_K)\subseteq D$, a measurable function $F:D\to \reals$ such that
for all $(s_1, \ldots , s_K)\in \Ss_1\times \ldots \times \Ss_K$ we have
\begin{equation}\label{eq:nomographic-def}
f(s_1, \ldots, s_K)= F\left(  \sum_{k=1}^K f_k(s_k)     \right ),
\end{equation}\label{eq:monotonic}
and there is a strictly increasing function $\Phi : [0, \infty) \to [0, \infty)$  with $\Phi(0)=0$ and
\begin{equation}\label{eq:monotone-domination}
| F(x)-F(y)|\le \Phi( | x-y | )
\end{equation}
for all $x,y \in D$. We call the function $\Phi$ an \emph{increment majorant} of $f$.
\end{definition}
%%%%%%%%%%%%%%%%%%%%%%%%%%%%%%%%%%%%%%%%%%%%%%%%
Some examples of functions in $\fr_{\textrm{mon}}$ are:
\begin{enumerate}
\item Obviously, all $f\in\fr_{K, \textrm{lin}}$ belong to $ \fr_{\textrm{mon}}$.
\item For any $f\in\fr_{K, \textrm{lin}}$ and $B$-Lipschitz function $F:\reals \to \reals$ we have $F\circ f\in \fr_{\textrm{mon}}$ with
$\Phi: [0, \infty) \to [0, \infty)$, $x\mapsto Bx$.
\item For any $p\ge 1$ and $\Ss_1, \ldots , \Ss_K$ compact, $|| \cdot ||_p \in \fr_{\textrm{mon}}$. In this example we have $f_k(s_k)=| s_k |^p$, $k=1,\ldots , K$,
$F: [0, \infty) \to [0, \infty)$, $x \mapsto x^{\frac{1}{p}}$, and $F=\Phi$.\\
This can be seen as follows. We have to show that for all nonnegative $x,y\in \reals$ and $p\ge 1$ we have
\begin{equation}
\label{eq:norm-mon}
| x^{\frac{1}{p}}   -y^{\frac{1}{p}}   |\le |  x-y   |^{\frac{1}{p}}.
\end{equation}
We can assume w.l.o.g. that $x<y$ holds. Then since
\begin{equation}
| x^{\frac{1}{p}}   -y^{\frac{1}{p}}   | = |y|^{\frac{1}{p}} \left( 1- \left(  \frac{x}{y}\right)^{\frac{1}{p}}     \right)      
\end{equation}
it suffices to prove that for all $a\in [0,1]$ and $p\ge 1$ we have
\begin{equation}
1-a^{\frac{1}{p}}\le \left(  1-a     \right) ^{\frac{1}{p}}  ,
\end{equation}
which in turn is equivalent to
\begin{equation}
1\le a^{\frac{1}{p}} +\left(  1-a     \right) ^{\frac{1}{p}}.
\end{equation}
Since this clearly holds for $a\in [0,1]$ and $p\ge 1$, we can conclude that (\ref{eq:norm-mon}) holds.
\end{enumerate}
%%%%%%%%%%%%%%%%%%%%%%%%%%%%%%%%%%%%%%%%%%%%%%%%%%%%
%
%                                     THEOREM: APPROXIMATION OF FUNCTIONS ACHIEVABILITY
%
%%%%%%%%%%%%%%%%%%%%%%%%%%%%%%%%%%%%%%%%%%%%%%%%%%%
We are now in a position to state our main theorem on approximation of functions in $ \fr_{\textrm{\textrm{mon}}}$.
To this end, we introduce the notion of total spread of  the inner part of $ f\in  \fr_{\textrm{\textrm{mon}}}$ as
\begin{equation}\label{eq:total-inner-spread}
\bar{\Delta}(f):=\sum_{k=1}^K ( \phi_{\max,k}-\phi_{\min,k}),
\end{equation}
along with the $\max $-spread
\begin{equation}\label{eq:max-inner-spread}
\Delta (f):= \max_{1\le k \le K} ( \phi_{\max,k}-\phi_{\min,k}),
\end{equation}
where
 \begin{equation}\label{eq:phi-def-spread}
 \phi_{\min,k}:= \inf_{s \in \mathcal{S}_k} f_k (s), \quad \phi_{\max,k}:=\sup_{s \in \mathcal{S}_k} f_k( s).
 \end{equation}
 We define the relative spread with power constraint $P$ as
 \begin{equation}\label{eq:relative-spread}
 \Delta (f\| P):= P \cdot \frac{\bar{\Delta} (f)}{\Delta (f)}.
 \end{equation}

\begin{theorem}\label{th:approximation-of-functions}
Let $f\in  \fr_{\textrm{\textrm{mon}}}$, $M\in \NN$, and the power constraint $P \in \reals_{+} $ be given. Let $\Phi$ be an increment majorant of $f$. Then for any $\eps>0$, there exist pre-processing and post-processing operations creating the estimate $ \bar{f}$
such that upon $M$ uses of the channel (\ref{eq:channel-model}) we have
\begin{multline}\label{eq:approximation-error-theorem}
 \mathbb{P} (  |    \bar{f} (s^K)- f(s^K)       |\ge \eps       ) \le  \Gamma_{1,M} (\eps, K,f,\sigma_F)\\
 + \Gamma_{2,M}(\eps, K, f, \sigma_N, \sigma_F, P),
 \end{multline}
 for all $s^K\in \Ss_1\times \ldots \times \Ss_K$, where
 \begin{itemize}
 \item $\Gamma_{1,M} (\eps,K,f,\sigma_F)\ $ is given by
 \begin{multline}\label{eq:approximation-error-theorem-term1}
 \Gamma_{1,M} (\eps, K,f,\sigma_F)
 \\ = 2 \exp \left( -\frac{M \eta^2}{2 \Delta(f) \sigma_F^2 \eta + 8 \Delta(f)^2 K \sigma_F^4} \right),
  \end{multline}
  where $\eta = \Phi^{-1}(\varepsilon)/2$.
  \item $\Gamma_{2,M}(\eps,K, f, \sigma_N, \sigma_F, P) $ is given by
  \begin{align}\label{eq:approximation-error-theorem-term2}
   \Gamma_{2,M}(\eps,K, f,\sigma_N, \sigma_F, P)=
   2 \exp \left (   -\frac{M \eta^2}{2 L \eta + 4  L^2}   \right ),
   \end{align}
where $L= 3\sigma_F^2 \bar{\Delta}(f)+\frac{4\sigma_N\sigma_F\sqrt{\Delta(f)\bar{\Delta}(f)}  }{\sqrt{P}}+\frac{2\sigma_{N}^2 \Delta(f)}{P}$ and $\eta = \Phi^{-1}(\varepsilon)/2$.

    \end{itemize}
\end{theorem}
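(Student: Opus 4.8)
The plan is to realise $\bar f$ through an \emph{energy--based} computation scheme over the fading multiple--access channel and then to control the estimation error by two applications of a Bernstein inequality for sums of independent sub-exponential random variables. We may assume $\Delta(f)>0$, since otherwise $f$ is constant and nothing is to be proved. Write $f=F\bigl(\sum_{k=1}^K f_k\bigr)$ with increment majorant $\Phi$ and put $\eta:=\Phi^{-1}(\eps)/2$. Given $s_k$, transmitter $k$ picks the power level $c_k(s_k):=\frac{P}{\Delta(f)}\bigl(f_k(s_k)-\phi_{\min,k}\bigr)\in[0,P]$ (the inclusion uses $f_k(s_k)-\phi_{\min,k}\le\phi_{\max,k}-\phi_{\min,k}\le\Delta(f)$) and transmits $x_k(m)=\sqrt{c_k(s_k)}$ for every $m$ — optionally times a unit-modulus symbol built from $U_k(m)$, which changes nothing below — so the peak--power constraint is met. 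The receiver forms $T:=\frac1M\sum_{m=1}^M|Y(m)|^2$, subtracts the (known) average noise power $\bar N^2:=\frac1M\sum_{m=1}^M\E|N(m)|^2$, and outputs $\bar f:=F\bigl(\pi_D(\hat S)\bigr)$, where $\hat S:=\sum_{k=1}^K\phi_{\min,k}+\frac{\Delta(f)}{2P}(T-\bar N^2)$ and $\pi_D$ is the nearest--point map onto $D$ (one may take $D$ to be an interval, as in all our examples; in general $F$ is first extended to such an interval by a McShane--type formula, valid whenever $\Phi$ is subadditive). With $A(m):=\sum_k\sqrt{c_k}\,H_k(m)$ and using $\E|H_k(m)|^2=2$ and $\E[A(m)\overline{N(m)}]=0$, one gets $\E T=2\sum_k c_k+\bar N^2$, hence $\hat S-\sum_k f_k(s_k)=\frac{\Delta(f)}{2P}(T-\E T)$; since $\sum_k f_k(s_k)\in D$, the projection can only shrink the error, so on the event $\{|\hat S-\sum_k f_k(s_k)|<2\eta\}$ we have $|\bar f-f(s^K)|\le\Phi\bigl(|\pi_D(\hat S)-\sum_k f_k(s_k)|\bigr)<\Phi(2\eta)=\eps$. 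It therefore suffices to bound $\mathbb P\bigl(\tfrac{\Delta(f)}{2P}|T-\E T|\ge2\eta\bigr)$.

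Expanding $|Y(m)|^2=|A(m)|^2+2\Real\bigl(A(m)\overline{N(m)}\bigr)+|N(m)|^2$ and splitting $|A(m)|^2$ into its diagonal part $\sum_k c_k|H_k(m)|^2$ and the remainder, I write $T-\E T=\frac1M\sum_m V_m+\frac1M\sum_m Z_m$, where $V_m:=\sum_k c_k\bigl(|H_k(m)|^2-2\bigr)$ is the diagonal fading fluctuation and $Z_m:=\bigl(|A(m)|^2-\sum_k c_k|H_k(m)|^2\bigr)+2\Real\bigl(A(m)\overline{N(m)}\bigr)+\bigl(|N(m)|^2-\E|N(m)|^2\bigr)$ collects the off-diagonal fading, the fading--noise cross term, and the noise fluctuation; all $V_m,Z_m$ have mean zero, and $(V_m)_m$, $(Z_m)_m$ are each independent across $m$. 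A union bound reduces the task to proving $\mathbb P\bigl(\tfrac{\Delta(f)}{2P}|\tfrac1M\sum_m V_m|\ge\eta\bigr)\le\Gamma_{1,M}$ and $\mathbb P\bigl(\tfrac{\Delta(f)}{2P}|\tfrac1M\sum_m Z_m|\ge\eta\bigr)\le\Gamma_{2,M}$. For the first, $\tfrac{\Delta(f)}{2P}V_m=\sum_k\tfrac{d_k}{2}\bigl((H_k^r(m)^2-1)+(H_k^i(m)^2-1)\bigr)$ with $d_k:=f_k(s_k)-\phi_{\min,k}\in[0,\Delta(f)]$, a sum of $2K$ independent mean-zero sub-exponential terms; the square--of--sub-gaussian estimates of Section~\ref{sec:sub-exp-sub-gauss} give each $H_k^r(m)^2-1$, $H_k^i(m)^2-1$ explicit Bernstein parameters of order $(\sigma_F^4,\sigma_F^2)$, so, using $\sum_k d_k^2\le K\Delta(f)^2$ and $\max_k d_k\le\Delta(f)$, the summand $\tfrac{\Delta(f)}{2P}V_m$ has variance proxy $\le4\Delta(f)^2K\sigma_F^4$ and scale $\le\Delta(f)\sigma_F^2$; Bernstein for the $M$ independent summands yields exactly $\Gamma_{1,M}$.

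For the second term I use the sub-exponential semi-norm $\subexpnorm{\cdot}$ and its triangle inequality: $\subexpnorm{\tfrac{\Delta(f)}{2P}Z_m}$ is at most the sum of the three rescaled pieces. The delicate piece is the off-diagonal fading term: instead of applying the triangle inequality for $\subgaussnorm{\cdot}$ — which would introduce a spurious factor $K$ — one exploits the \emph{independence} of $H_1(m),\dots,H_K(m)$ to get $\subgaussnorm{A^r(m)}^2\le\sum_k c_k\sigma_F^2$ (and likewise for $A^i$), whence $A^r(m)^2$ and $\sum_k c_k H_k^r(m)^2$ are sub-exponential of norm $O\bigl(\sum_k c_k\sigma_F^2\bigr)=O\bigl(\tfrac{P}{\Delta(f)}\bar\Delta(f)\sigma_F^2\bigr)$, so this piece contributes $\le3\sigma_F^2\bar\Delta(f)$ after rescaling. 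Similarly $A(m)\overline{N(m)}$, a product of independent sub-gaussians, is sub-exponential of norm $O\bigl(\sqrt{\sum_k c_k}\,\sigma_F\sigma_N\bigr)$, contributing $\le\tfrac{4\sigma_N\sigma_F\sqrt{\Delta(f)\bar\Delta(f)}}{\sqrt P}$, and $|N(m)|^2-\E|N(m)|^2$ is sub-exponential of norm $O(\sigma_N^2)$, contributing $\le\tfrac{2\sigma_N^2\Delta(f)}{P}$; adding the three gives $\subexpnorm{\tfrac{\Delta(f)}{2P}Z_m}\le L$, hence Bernstein parameters with variance proxy $2L^2$ and scale $L$, and Bernstein over the $M$ independent $Z_m$ produces exactly $\Gamma_{2,M}$.

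The main obstacle is not the overall structure — once the decomposition above is in place, everything is routine Bernstein bookkeeping — but rather (i) carrying out the sub-gaussian / sub-exponential calculus (squares, products, sums, and the passage from $\subexpnorm{\cdot}$ to Bernstein parameters) with fully \emph{explicit} constants, which is the role of the preparatory material, and (ii) the $K$-free bound for the off-diagonal fading term, which hinges on using independence of the $H_k(m)$ rather than a norm triangle inequality. A lesser technical point is the treatment of $F$ off $D$ and the assumption that the receiver knows the average noise power $\bar N^2$.
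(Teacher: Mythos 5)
Your proposal is correct and follows essentially the same route as the paper: power-encoding $f_k(s_k)$ via the affine map onto $[0,P]$, energy detection with noise-mean subtraction, splitting the deviation into the diagonal fading fluctuation and the effective-noise term, and bounding each with Bernstein's inequality after computing sub-exponential norms (using rotation invariance of independent sub-gaussians to avoid the spurious factor of $K$), which reproduces exactly $\Gamma_{1,M}$ and $\Gamma_{2,M}$. Your explicit projection $\pi_D$ (or McShane extension) handling the case $\tilde{h}(s^K)\notin D$ is a small refinement the paper glosses over, and your observation that the random signs $U_k(m)$ are inessential to the analysis is consistent with how they are (not) used in the paper's argument.
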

\begin{remark}
\label{remark:communication-cost}
This theorem implies a bound on the communication cost $M(f,\varepsilon,\delta)$. Indeed, we can upper bound (\ref{eq:approximation-error-theorem}) as
\begin{multline*}
 \mathbb{P} (  |    \bar{f} (s^K)- f(s^K)       |\ge \eps       ) \\
 \le 
 2\max\left(
   \Gamma_{1,M} (\eps, K,f,\sigma_F),
   \Gamma_{2,M}(\eps, K, f, \sigma_N, \sigma_F, P)
 \right)
\end{multline*}
and solve the expression for $M$ to obtain
\begin{align}\label{eq:commcost}
M(f,\varepsilon,\delta) \leq \frac{\log 4 - \log \delta}{\Phi^{-1}(\varepsilon)^2} \max\left(\gamma_1(f,\varepsilon,\delta), \gamma_2(f,\varepsilon,\delta)\right),
\end{align}
where
\begin{align}
\gamma_1(f,\varepsilon,\delta)
&=
4 \Delta(f) \sigma_F^2\Phi^{-1}(\varepsilon) + 32 \Delta(f)^2K\sigma_F^4
\\
\gamma_2(f,\varepsilon,\delta)
&=
4L\Phi^{-1}(\varepsilon)+16L^2.
\end{align}
\end{remark}
\begin{example}\label{example:sumfunction}
Consider the sum function
\[
f: [0,1]^K \rightarrow \reals,~ (s_1, \dots, s_K) \mapsto s_1 + \dots + s_K.
\]
By Definition~\ref{def:Fmon}, $f \in \fr_{\textrm{\textrm{mon}}}$, where $f_1, \dots, f_K, F$ and $\Phi$ are all equal to the identity function. We therefore have $\bar{\Delta}(f) = K$, $\Delta(f) = 1$ and $\eta = \varepsilon/2$. Putting these into (\ref{eq:approximation-error-theorem-term1}) and (\ref{eq:approximation-error-theorem-term2}), we have
\begin{align}
\label{eq:examplesum-Gamma1}
\Gamma_{1,M} (\eps, K,f,\sigma_F)
&=
2 \exp \left( -\frac{M \varepsilon^2}{4 \sigma_F^2 \varepsilon + 32 K \sigma_F^4} \right)
\\
\label{eq:examplesum-Gamma2}
\Gamma_{2,M}(\eps,K, f,\sigma_N, \sigma_F, P)
&=
2 \exp \left (   -\frac{M \varepsilon^2}{4 L \varepsilon + 16  L^2}   \right ),
\end{align}
where
\begin{align}\label{eq:sumfunction-L}
L= 3\sigma_F^2 K+\frac{4\sigma_N\sigma_F  \sqrt{K}}{\sqrt{P}}+\frac{2\sigma_{N}^2}{P}.
\end{align}
We can also view this in term of the communication cost; i.e., for this $f$, (\ref{eq:commcost}) holds with $\Phi^{-1}(\varepsilon) = \varepsilon$ and
\begin{align}
\gamma_1(f,\varepsilon,\delta)
&=
4 \sigma_F^2\varepsilon + 32 K\sigma_F^4
\\
\gamma_2(f,\varepsilon,\delta)
&=
4L\varepsilon+16L^2,
\end{align}
where $L$ is given in (\ref{eq:sumfunction-L}).

Therefore, if we want to achieve a bounded approximation error in the case $K \rightarrow \infty$, we have to let $M$ grow proportionally with $K^2$.
\end{example}
\begin{example}\label{example:arithmeticavgfunction}
Consider the arithmetic average function
\[
f: [0,1]^K \rightarrow \reals,~ (s_1, \dots, s_K) \mapsto \frac{s_1 + \dots + s_K}{K}.
\]
The situation is almost the same as in Example~\ref{example:sumfunction}, except that $F:s \mapsto s/K$ and $\Phi = F$. Therefore, $\Delta(f)$ and $\bar{\Delta}(f)$ are as in Example~\ref{example:sumfunction}, while $\eta = K \varepsilon / 2$. Substituting these into (\ref{eq:approximation-error-theorem-term1}) and (\ref{eq:approximation-error-theorem-term2}) yields
\begin{align}
\Gamma_{1,M} (\eps, K,f,\sigma_F)
&=
2 \exp \left( -\frac{M K \varepsilon^2}{4 \sigma_F^2 \varepsilon + 32 \sigma_F^4} \right)
\\
\Gamma_{2,M}(\eps,K, f,\sigma_N, \sigma_F, P)
&=
2 \exp \left (   -\frac{M \varepsilon^2}{4 L \varepsilon + 16  L^2}   \right ),
\end{align}
where
\begin{align}\label{eq:avgfunction-L}
L= 3\sigma_F^2+\frac{4\sigma_N\sigma_F}{\sqrt{PK}}+\frac{2\sigma_{N}^2}{PK}.
\end{align}
Therefore, we can achieve a bounded approximation error in the case $K \rightarrow \infty$ without having to let $M$ grow with $K$.
\end{example}
\begin{example}
Consider the $2$-norm of vectors in a hypercube
\[
f: [-1,1]^K \rightarrow \reals, (s_1, \dots, s_K) \mapsto \sqrt{s_1^2 + \dots + s_K^2}.
\]
We have $f_1 = \dots = f_K: s \mapsto s^2$ and $F = \Phi: s \mapsto \sqrt{s}$. Therefore, $\Delta(f) = 1$, $\bar{\Delta}(f) = K$ and $\eta = \varepsilon^2/2$. Substituting these into (\ref{eq:approximation-error-theorem-term1}) and (\ref{eq:approximation-error-theorem-term2}), we get
\begin{align}
\Gamma_{1,M} (\eps, K,f,\sigma_F)
&=
2 \exp \left( -\frac{M \varepsilon^4}{4 \sigma_F^2 \varepsilon^2 + 32 K \sigma_F^4} \right)
\\
\Gamma_{2,M}(\eps,K, f,\sigma_N, \sigma_F, P)
&=
2 \exp \left (   -\frac{M \varepsilon^4}{4 L \varepsilon^2 + 16  L^2}   \right ),
\end{align}
where
$L= 3\sigma_F^2 K+\frac{4\sigma_N\sigma_F  \sqrt{K}}{\sqrt{P}}+\frac{2\sigma_{N}^2}{P}$.
\end{example}

%%%%%%%%%%%%%%%%%%%%%%%%%%%%%%%%%%%%%%%%%%%%%%%%%%%%%%%%%%%%%%%%%%%
%
%                                                     Distributed Function Approximation in Machine Learning
%
%%%%%%%%%%%%%%%%%%%%%%%%%%%%%%%%%%%%%%%%%%%%%%%%%%%%%%%%%%%%%%%%%%%
%
\section{Distributed Function Approximation in Machine Learning}
\label{sec:ml}
{
\newcommand{\mlInputAlphabet}{{\mathcal{X}}}
\newcommand{\mlLabelAlphabet}{{\mathcal{Y}}}
\newcommand{\mlInputAlphabetElement}{{x}}
\newcommand{\mlLabelAlphabetElement}{{y}}
\newcommand{\mlEstimator}{{f}}
\newcommand{\mlLoss}{{L}}
\newcommand{\mlDistribution}{{\mathcal{P}}}
\newcommand{\Expectation}{{\mathbb{E}}}
\newcommand{\mlInputRV}{{X}}
\newcommand{\mlOutputRV}{{Y}}
\newcommand{\mlRisk}[2]{{\mathcal{R}_{{#1},{#2}}}}
\newcommand{\mlEstimatorOutput}{{t}}
\newcommand{\mlKernel}{{\kappa}}
\newcommand{\numUsers}{{K}}
\newcommand{\indexUsers}{{k}}
\newcommand{\RKHS}{{\mathcal{H}}}
\newcommand{\mlTrainingSampleNum}{{N}}
\newcommand{\mlIndexTrainingSample}{{n}}
\newcommand{\mlEstimatorCoefficient}{{\alpha}}
\newcommand{\partfuncminvalue}[1]{{\phi_{\min,{#1}}}}
\newcommand{\partfuncmaxvalue}[1]{{\phi_{\max,{#1}}}}
\newcommand{\errorconstone}{{\varepsilon}}
\newcommand{\errorconsttwo}{{\delta}}
\newcommand{\Probability}{{\mathbb{P}}}
\newcommand{\absolute}[1]{{\lvert {#1} \rvert}}
\newcommand{\mlLossLipschitz}{{B}}

In this section, we discuss how the methods described in this paper can be used to compute the estimators of support vector machines (SVM) in a distributed fashion. First, we briefly sketch the setting as in~\cite{steinwart}. We consider an input alphabet $\mlInputAlphabet$, a label alphabet $\mlLabelAlphabet \subseteq \reals$ and a probability distribution $\mlDistribution$ on $\mlInputAlphabet \times \mlLabelAlphabet$ which is in general unknown. A statistical inference problem is characterized by the input alphabet, the label alphabet and a loss function $\mlLoss: \mlInputAlphabet \times \mlLabelAlphabet \times \reals \rightarrow [0, \infty)$. The objective is, given training samples drawn i.i.d. from $\mlDistribution$, to find an estimator function $\mlEstimator: \mlInputAlphabet \rightarrow \reals$ such that the risk $\mlRisk{\mlLoss}{\mlDistribution} := \Expectation_\mlDistribution \mlLoss(\mlInputRV, \mlOutputRV, \mlEstimator(\mlInputRV))$ is as small as possible. In order for the risk to exist, we must impose suitable measurability conditions on $\mlLoss$ and $\mlEstimator$. In this paper, we deal with Lipschitz-continuous losses. We say that the loss $\mlLoss$ is $\mlLossLipschitz$-Lipschitz-continuous if $\mlLoss(\mlInputAlphabetElement, \mlLabelAlphabetElement, \cdot)$ is Lipschitz-continuous for all $\mlInputAlphabetElement \in \mlInputAlphabet$ and $\mlLabelAlphabetElement \in \mlLabelAlphabet$ with a Lipschitz constant uniformly bounded by $\mlLossLipschitz$. Lipschitz-continuity of a loss function is a property that is also often needed in other contexts. Fortunately, many loss functions of practical interest possess this property. For instance, the absolute distance loss, the logistic loss, the Huber loss and the $\varepsilon$-insensitive loss, all of which are commonly used in regression problems~\cite[Section 2.4]{steinwart}, are Lipschitz-continuous. Even in scenarios in which the naturally arising loss is not Lipschitz-continuous, for the purpose of designing the machine learning model, it is often replaced with a Lipschitz-continuous alternative. For instance, in binary classification, we have $\mlLabelAlphabet = \{-1,1\}$ and the loss function is given by
\[
(\mlInputAlphabetElement,\mlLabelAlphabetElement,\mlEstimatorOutput) \mapsto
\begin{cases}
  0, &\sign(\mlLabelAlphabetElement) = \sign(\mlEstimatorOutput) \\
  1, &\text{otherwise.}
\end{cases}
\]
This loss is not even continuous, which makes it hard to deal with. So for the purpose of designing the machine learning model, it is commonly replaced with the Lipschitz-continuous hinge loss or logistic loss~\cite[Section 2.3]{steinwart}.

Here, we consider the case in which the inputs are $\numUsers$-tuples and the SVM can be trained in a centralized fashion. The actual predictions, however, are performed in a distributed setting; i.e., there are $\numUsers$ users each of which observes only one component of the input. The objective is to make an estimate of the label available at the receiver while using as little communication resources as possible.

To this end, we consider the case of additive models which is described in~\cite[Section 3.1]{christmann2012consistency}. We have $\mlInputAlphabet = \mlInputAlphabet_1 \times \dots \times \mlInputAlphabet_\numUsers$ and a kernel $\mlKernel_\indexUsers: \mlInputAlphabet_\indexUsers \times \mlInputAlphabet_\indexUsers \rightarrow \reals$ with an associated reproducing kernel Hilbert space $\RKHS_\indexUsers$ of functions mapping from $\mlInputAlphabet_\indexUsers$ to $\reals$ for each $\indexUsers \in \{1,\dots,\numUsers\}$. Then by~\cite[Theorem 2]{christmann2012consistency}
\begin{multline}\label{eq:addkernel}
\mlKernel: \mlInputAlphabet \times \mlInputAlphabet \rightarrow \reals,~
((\mlInputAlphabetElement_1,\dots,\mlInputAlphabetElement_\numUsers), (\mlInputAlphabetElement'_1,\dots,\mlInputAlphabetElement'_\numUsers))
\mapsto \\
\mlKernel_1(\mlInputAlphabetElement_1, \mlInputAlphabetElement'_1) + \dots + \mlKernel_\numUsers(\mlInputAlphabetElement_\numUsers, \mlInputAlphabetElement'_\numUsers)
\end{multline}
is a kernel and the associated reproducing kernel Hilbert space is
\begin{align}\label{eq:addkernel-rkhs}
\RKHS := \{\mlEstimator_1 + \dots + \mlEstimator_\numUsers: \mlEstimator_1 \in \RKHS_1, \dots, \mlEstimator_\numUsers \in \RKHS_\numUsers\}.
\end{align}
So this model is appropriate whenever the function to be approximated is expected to have an additive structure. We know~\cite[Theorem 5.5]{steinwart} that an SVM estimator has the form
\begin{align}\label{eq:svmestimator}
\mlEstimator(\mlInputAlphabetElement) = \sum_{\mlIndexTrainingSample=1}^\mlTrainingSampleNum \mlEstimatorCoefficient_\mlIndexTrainingSample \mlKernel(\mlInputAlphabetElement, \mlInputAlphabetElement^\mlIndexTrainingSample),
\end{align}
where $\mlEstimatorCoefficient_1, \dots \mlEstimatorCoefficient_\mlTrainingSampleNum \in \reals$ and $\mlInputAlphabetElement^1, \dots \mlInputAlphabetElement^\mlTrainingSampleNum \in \mlInputAlphabet$. In our additive model, this is
\begin{align}\label{eq:addestimator}
\mlEstimator(\mlInputAlphabetElement_1, \dots, \mlInputAlphabetElement_\indexUsers)
=
\sum_{\indexUsers=1}^\numUsers \mlEstimator_\indexUsers(\mlInputAlphabetElement_\indexUsers),
\end{align}
where for each $\indexUsers$,
\begin{align}\label{eq:addestimator-detail}
\mlEstimator_\indexUsers(\mlInputAlphabetElement_\indexUsers) =
\sum_{\mlIndexTrainingSample=1}^\mlTrainingSampleNum \mlEstimatorCoefficient_\mlIndexTrainingSample \mlKernel_\indexUsers(\mlInputAlphabetElement_\indexUsers, \mlInputAlphabetElement_\indexUsers^\mlIndexTrainingSample).
\end{align}

We can now state a result for the distributed approximation of the estimator of such an additive model as an immediate corollary to Theorem~\ref{th:approximation-of-functions}.

\begin{cor}
Consider an additive machine learning model, i.e., we have an estimator of the form (\ref{eq:addestimator}), and assume that $\mlLoss$ is a $\mlLossLipschitz$-Lipschitz-continuous loss. Suppose further that all the $\mlEstimator_\numUsers$ have bounded range such that the quantities $\bar{\Delta}(f)$ and $\Delta(f)$ as defined in (\ref{eq:total-inner-spread}) and (\ref{eq:max-inner-spread}) exist and are finite. Let $\errorconstone, \errorconsttwo > 0$ and $M \geq M(\mlEstimator,\errorconstone, \errorconsttwo)$ as defined in (\ref{eq:commcost}), where $\Phi^{-1}(\varepsilon) = \varepsilon$. Then, given any $\mlInputAlphabetElement^\numUsers = (\mlInputAlphabetElement_1, \dots, \mlInputAlphabetElement_\numUsers)$ at the transmitters and any $\mlLabelAlphabetElement \in \mlLabelAlphabet$, through $M$ uses of the channel (\ref{eq:channel-model}), the receiver can obtain an estimate $\bar{\mlEstimator}$ of $\mlEstimator(\mlInputAlphabetElement^\numUsers)$ satisfying
\begin{align}
\label{eq:mlapplication-cor}
\Probability(\absolute{\mlLoss(\mlInputAlphabetElement^\numUsers,\mlLabelAlphabetElement,\bar{\mlEstimator}) - \mlLoss(\mlInputAlphabetElement^\numUsers,\mlLabelAlphabetElement,\mlEstimator(\mlInputAlphabetElement^\numUsers))} \geq \mlLossLipschitz \errorconstone)
\leq \errorconsttwo.
\end{align}
\end{cor}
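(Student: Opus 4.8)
The plan is to deduce the corollary directly from Theorem~\ref{th:approximation-of-functions} by invoking the theorem to approximate the estimator $\mlEstimator$ itself and then composing with the loss function. First I would observe that the estimator $\mlEstimator$ in~(\ref{eq:addestimator}) is a generalized linear function, since it is the sum of the functions $\mlEstimator_\indexUsers \in \fr_{\indexUsers,\infty}$ (bounded by the hypothesis that the ranges are bounded), so $\mlEstimator \in \fr_{K,\textrm{lin}} \subseteq \fr_{\textrm{mon}}$ with $F$ and $\Phi$ both equal to the identity, whence $\Phi^{-1}(\varepsilon) = \varepsilon$, matching the choice made in the statement. Thus Theorem~\ref{th:approximation-of-functions} applies: for $M \geq M(\mlEstimator,\varepsilon,\delta)$ there are pre- and post-processing operations producing an estimate $\bar{\mlEstimator}$ with $\Probability(\absolute{\bar{\mlEstimator} - \mlEstimator(\mlInputAlphabetElement^\numUsers)} \geq \varepsilon) \leq \delta$ for every input tuple.

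Next I would bring in the Lipschitz property of the loss. Since $\mlLoss$ is $\mlLossLipschitz$-Lipschitz-continuous, for every fixed $\mlInputAlphabetElement^\numUsers$ and $\mlLabelAlphabetElement$ the map $t \mapsto \mlLoss(\mlInputAlphabetElement^\numUsers,\mlLabelAlphabetElement,t)$ is $\mlLossLipschitz$-Lipschitz, so the event $\{\absolute{\bar{\mlEstimator}-\mlEstimator(\mlInputAlphabetElement^\numUsers)} < \varepsilon\}$ is contained in the event $\{\absolute{\mlLoss(\mlInputAlphabetElement^\numUsers,\mlLabelAlphabetElement,\bar{\mlEstimator}) - \mlLoss(\mlInputAlphabetElement^\numUsers,\mlLabelAlphabetElement,\mlEstimator(\mlInputAlphabetElement^\numUsers))} < \mlLossLipschitz\varepsilon\}$. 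Taking complements and applying monotonicity of probability, the bound $\Probability(\absolute{\bar{\mlEstimator}-\mlEstimator(\mlInputAlphabetElement^\numUsers)} \geq \varepsilon) \leq \delta$ immediately yields~(\ref{eq:mlapplication-cor}). One small point to note is that measurability of the composed quantity is not an issue, since $\mlLoss$ is assumed measurable and $\bar{\mlEstimator}$ is a measurable function of the channel output by the definition of an admissible scheme; the Lipschitz continuity in the last argument also guarantees that $t\mapsto \mlLoss(\mlInputAlphabetElement^\numUsers,\mlLabelAlphabetElement,t)$ is continuous, hence Borel measurable.

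There is essentially no hard step here: the content is entirely in Theorem~\ref{th:approximation-of-functions}, and the corollary is a one-line consequence of the Lipschitz bound. The only thing requiring a word of care is checking that the hypotheses of the theorem are genuinely met — specifically that $\bar{\Delta}(\mlEstimator)$ and $\Delta(\mlEstimator)$ are finite, which is precisely what the boundedness assumption on the ranges of the $\mlEstimator_\numUsers$ provides, and that $\mlEstimator$ lies in $\fr_{\textrm{mon}}$ with the identity increment majorant so that the formula~(\ref{eq:commcost}) for $M(\mlEstimator,\varepsilon,\delta)$ with $\Phi^{-1}(\varepsilon)=\varepsilon$ is the correct one to quote. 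Once that bookkeeping is in place, the proof is complete.
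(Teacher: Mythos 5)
Your proposal is correct and follows essentially the same route as the paper: bound the loss-deviation event by the estimator-deviation event via Lipschitz continuity, then invoke Theorem~\ref{th:approximation-of-functions} through the communication-cost bound of Remark~\ref{remark:communication-cost} with $\Phi^{-1}(\varepsilon)=\varepsilon$. The extra bookkeeping you supply (that $\mlEstimator\in\fr_{K,\textrm{lin}}\subseteq\fr_{\textrm{mon}}$ with identity increment majorant, and the measurability remarks) is accurate and merely makes explicit what the paper leaves implicit.
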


\begin{proof}
The Lipschitz continuity of $\mlLoss$ yields
\begin{multline*}
\Probability(\absolute{\mlLoss(\mlInputAlphabetElement^\numUsers,\mlLabelAlphabetElement,\bar{\mlEstimator}) - \mlLoss(\mlInputAlphabetElement^\numUsers,\mlLabelAlphabetElement,\mlEstimator(\mlInputAlphabetElement^\numUsers))} \geq \mlLossLipschitz \errorconstone)
\\
\leq
\Probability(\absolute{\bar{\mlEstimator} - \mlEstimator(\mlInputAlphabetElement^\numUsers)} \geq \errorconstone),
\end{multline*}
from which (\ref{eq:mlapplication-cor}) follows by Remark~\ref{remark:communication-cost}.
\end{proof}

We conclude this section with a brief discussion of the feasibility of the condition that $\mlEstimator_1, \dots, \mlEstimator_\numUsers$ have bounded ranges in the case of the additive SVM model discussed above. The coefficients $\mlEstimatorCoefficient_1, \dots, \mlEstimatorCoefficient_\mlTrainingSampleNum$ are a result of the training step and can therefore be considered constant, so all we need is for the ranges of $\mlKernel_1, \dots, \mlKernel_\numUsers$ to be bounded. This heavily depends on $\mlInputAlphabet_1, \dots, \mlInputAlphabet_\numUsers$ and the choices of the kernels, but we remark that the boundedness criterion is satisfied in many cases of interest. The range of Gaussian kernels is always a subset of $(0,1]$, and while other frequent choices such as exponential, polynomial and linear kernels can have arbitrarily large ranges, they are nonetheless continuous which means that as long as the input alphabets are compact topological spaces (e.g. closed hyperrectangles or balls), the ranges are also compact, and therefore bounded.
}
%%%%%%%%%%%%%%%%%%%%%%%%%%%%%%%%%%%%%%%%%%%%%%%%%%%%%%%%%%%%%%%%%%%
%
%                                                     Application to Max-consensus problem
%
%%%%%%%%%%%%%%%%%%%%%%%%%%%%%%%%%%%%%%%%%%%%%%%%%%%%%%%%%%%%%%%%%%%
%
\section{Application to the Max-Consensus Problem}
\label{sec:max-consensus}
{
\newcommand{\coordinator}{{C}}
\newcommand{\agent}[1]{{A_{#1}}}
\newcommand{\agentSet}{\ensuremath{\mathcal{A}}}
\newcommand{\edgeSet}{\ensuremath{\mathcal{E}}}
\newcommand{\numAgents}{{K}}
\newcommand{\agentIndex}{{k}}
\newcommand{\noise}{{N}}
\newcommand{\agentNoise}[1]{\ensuremath{N_{#1}}}
\newcommand{\channelFading}[1]{\ensuremath{h_{#1}}}
\newcommand{\agentTx}[1]{{\alpha_{#1}}}
\newcommand{\agentRx}[1]{\ensuremath{\Gamma_{#1}}}
\newcommand{\coordinatorRx}{{\gamma}}
\newcommand{\coordinatorTx}{\ensuremath{\beta}}
\newcommand{\agentTxDomain}{{[0,1]}}
\newcommand{\lexLess}{{<}}
\newcommand{\lexGreater}{{>}}
\newcommand{\lexLessCompatible}{{\leq}}
\newcommand{\lexGreaterCompatible}{{\geq}}
\newcommand{\agentInputSequence}[1]{{S_{#1}}}
\newcommand{\coordinatorOutputEstimate}{{S}}
\newcommand{\naturals}{{\mathbb{N}}}
\newcommand{\infiniteBinarySequences}{{\{0,1\}^\infty}}
\newcommand{\finiteBinarySequences}{{\{0,1\}^{<\infty}}}
\newcommand{\binarySequences}{{\{0,1\}^{\leq\infty}}}
\newcommand{\generalBinarySequence}{{S}}
\newcommand{\outputCondition}{{\varphi}}
\newcommand{\outputConditionFreeVariable}{{x}}
\newcommand{\maximumRemainingAgents}{{m}}
\newcommand{\maximumRemainingAgentsSet}{{\mathcal{M}}}
\newcommand{\stepIndex}{{t}}
\newcommand{\emptySequence}{{\emptyset}}
\newcommand{\activeAgents}{{\mathcal{A}}}
\newcommand{\protestingAgents}{{\mathcal{P}}}
\newcommand{\raisingAgents}{{\mathcal{R}}}
\newcommand{\append}{{{}^\frown}}
\newcommand{\coordinatorTerminationCount}{{T}}
\newcommand{\cardinality}[1]{{\lvert{#1}\rvert}}
\newcommand{\Probability}{{\mathbb{P}}}
\newcommand{\goodStateEvent}{{\mathbb{G}}}
\newcommand{\badStateEvent}{{\tilde{\mathbb{G}}}}
\newcommand{\goodTermination}{{\mathbb{T}}}
\newcommand{\badTermination}{{\tilde{\mathbb{T}}}}
\newcommand{\probabilitySpace}{{\Omega}}
\newcommand{\descriptionLength}{{d}}
\newcommand{\currentLength}{{\ell}}
\newcommand{\correctionEvent}{{\mathcal{C}}}
\newcommand{\randomWalkLike}[1]{{R(#1)}}
\newcommand{\randomWalkLikeModified}[1]{{R'(#1)}}
\newcommand{\actualRandomWalk}[1]{{\bar{R}(#1)}}
\newcommand{\badDigits}{{b}}
\newcommand{\goodDigits}{{g}}
\newcommand{\terminationThreshold}{{\tau}}
\newcommand{\agentInputSet}{{\mathcal{S}}}
\newcommand{\networkGraph}{{\mathcal{G}}}
\newcommand{\numCoordinators}{{c}}
\newcommand{\coordinatorIndex}{{\ell}}
\newcommand{\indicator}[1]{{\mathbf{1}_{{#1}}}}
\newcommand{\compatible}{||}
\newcommand{\generalindex}{{k}}
\newcommand{\generalbit}{{b}}
\newcommand{\residualProbability}{\varepsilon}
\newcommand{\quantizationPrecision}{{p}}
\newcommand{\absolute}[1]{{\lvert {#1} \rvert}}
\newcommand{\terminationRelation}{{R}}
\newcommand{\errorconsttwo}{{\gamma}}

In this section, we consider the problem of achieving max-consensus in a network of agents, which is relevant in many practical applications such as task assignment, leader election, rendezvous, clock synchronization, spectrum sensing, distributed decision making and formation control. In our previous work~\cite{agrawal2019scalable}, we propose the \emph{ScalableMax} scheme for achieving max-consensus in large star-shaped networks and an extension to not necessarily star-shaped but nonetheless highly connected networks. A notable restriction in~\cite{agrawal2019scalable} is the assumption that the fading coefficients of the wireless multiple-access channel considered are all deterministically equal to $1$. In this section, we show how this restriction can be lifted in light of Theorem~\ref{th:approximation-of-functions} of the present work so as to accommodate a fast fading channel.

In the max-consensus problem, we consider $\numAgents$ agents, all of which hold an input value from a totally ordered set (e.g., a real interval). We say that max-consensus has been achieved if all agents hold an output estimate which is equal to the maximum of the inputs. The objective of the max-consensus problem is to achieve max-consensus with as little usage of communication resources as possible.

The \emph{ScalableMax} scheme solves a somewhat simpler problem, called \emph{weak $\maximumRemainingAgents$-max-consensus}, where  $\maximumRemainingAgents$ is a designable parameter that can depend on the noise, but is usually independent of the size of the network. It is argued that once weak $\maximumRemainingAgents$-max-consensus has been achieved, state-of-the-art methods allow for reaching max-consensus in a number of channel uses linear in $\maximumRemainingAgents$. The higher $\maximumRemainingAgents$ is, the more resilient the scheme is against noise, but the more channel resources are needed for this follow-up scheme.

The communication model assumed in~\cite{agrawal2019scalable} is that the coordinator can multicast digital information noiselessly to all the agents simultaneously (which can in a practical application be achieved with state-of-the-art coding techniques even if the actually available channel is noisy) and that the agents can simultaneously transmit analog messages to the coordinator through a multiple-access channel $\coordinatorRx = \sum\nolimits_{k=1}^\numAgents \agentTx{k} + \noise$, where $\agentTx{1}, \dots, \agentTx{\numAgents}$ are the channel inputs, $\coordinatorRx$ is the channel output and $\noise$ is arbitrary noise, the tail probabilities of which can be bounded. One communication \emph{step} is a combination of one noiseless multicast transmission from the coordinator to the agents and three uses of the noisy multiple-access channel from the agents to the coordinator. \cite{agrawal2019scalable} also defines the \emph{maximum description length} $\descriptionLength$, which depends on the agents' inputs, but is, e.g., logarithmic in the number of agents in case their inputs are uniformly randomly distributed over some real interval. We have the following result about achieving weak $\maximumRemainingAgents$-max-consensus in this system:

\begin{theorem}[\cite{agrawal2019scalable}]
\label{theorem:maxconsensus}
Suppose that $\maximumRemainingAgents$ is even. Then the probability that the ScalableMax scheme reaches weak $\maximumRemainingAgents$-max-consensus within $\descriptionLength+1$ steps is at least $\Probability(\absolute{\noise} \leq \maximumRemainingAgents/4)^{3(\descriptionLength+1)}$.
\end{theorem}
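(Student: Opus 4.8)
This theorem is quoted from \cite{agrawal2019scalable}, so I only indicate how one would obtain it. The \emph{ScalableMax} scheme runs in \emph{steps}, each step being one noiseless downlink multicast from the coordinator to all agents followed by three uses of the uplink multiple-access channel $\gamma = \sum_{k=1}^{K}\alpha_k + N$. Between steps the coordinator keeps a binary string that it regards as a prefix of the binary expansion of the maximum input, together with the set of \emph{active} agents whose input is consistent with this prefix; during a step the active agents transmit nonnegative symbols encoding their next input bit (and, in the more elaborate variant, whether they are ``protesting'' or ``raising''), so that each of the three uplink receptions equals the cardinality of some subset of the current active set plus an independent noise term. Comparing these three noise-perturbed integer counts against fixed thresholds, the coordinator extends the prefix, updates the active set, and decides whether to terminate. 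Since the entire run is a deterministic function of the inputs and of the $3(d+1)$ noise samples used during the first $d+1$ steps, the plan is to show that whenever all these samples are small the run is identical to the idealized noise-free run, whose correctness is already established in \cite{agrawal2019scalable}.

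Concretely, I would argue in three steps. First, let $\mathbb{G}$ be the event that each of the $3(d+1)$ noise realizations occurring in the first $d+1$ steps satisfies $|N|\le m/4$; by independence of the noise across channel uses, $\mathbb{P}(\mathbb{G})\ge \mathbb{P}(|N|\le m/4)^{3(d+1)}$. Second, I would show by induction on the step index that on $\mathbb{G}$ every comparison the coordinator performs in steps $1,\dots,d+1$ produces the same outcome as in the noise-free scheme: the compared quantity is a nonnegative integer, and since $m$ is even one verifies that the pertinent threshold always lies at distance at least $m/2$ from it, so an additive perturbation of magnitude at most $m/4$ cannot move the observation to the wrong side; consequently the maintained prefix and the active set evolve on $\mathbb{G}$ exactly as they would without noise. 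Third, invoke the noise-free analysis of \cite{agrawal2019scalable}: because the maximum description length is $d$, the noise-free scheme reaches weak $m$-max-consensus within $d+1$ steps, hence so does the actual scheme on the event $\mathbb{G}$. Combining the first and third points yields the claimed lower bound.

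The heart of the argument is the second step, and this is where I expect the real work to be: one has to traverse the coordinator's decision logic case by case and check that, under the parity assumption on $m$, every threshold it uses is separated by at least $m/2$ from every admissible value of the integer-valued count it is compared with — in particular in the borderline situations where the coordinator is \emph{allowed} to be imprecise, which is exactly what makes the target notion \emph{weak} $m$-max-consensus rather than exact consensus and what forces the corresponding ambiguity window to have width at most $m$, so that a resolution error of up to $m/4$ on either side is absorbed. The hypothesis that $m$ is even is precisely what makes $m/2$, and hence $m/4$, the natural quantities in this bookkeeping. Once this separation fact is in place, the first step is a one-line consequence of independence and the third is the purely combinatorial bound on the length of the noise-free run, so nothing further is required.
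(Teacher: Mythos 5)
This theorem is imported verbatim from the cited work on ScalableMax; the present paper contains no proof of it, so there is nothing internal to compare your argument against. Your reconstruction --- a good-event decomposition over the $3(d+1)$ independent noise samples, an induction showing that on this event the coordinator's integer-valued comparisons all resolve as in the noise-free run because the even parity of $m$ places every threshold at distance at least $m/2$ from the admissible counts, and then the combinatorial bound of $d+1$ steps via the maximum description length --- is exactly the structure the scheme's description in Section~\ref{sec:max-consensus} suggests, and it is the standard way such a probability bound of the form $\Probability(\absolute{\noise}\le m/4)^{3(d+1)}$ arises. The only caveat is that the case-by-case verification of the threshold separation in your second step cannot be checked against anything in this paper and must be taken on faith from the cited reference, as you yourself acknowledge; as a proof \emph{sketch} of a quoted result, it is appropriate and consistent.
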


As a direct corollary of Theorem~\ref{th:approximation-of-functions}, we can now replace one use of the idealized channel assumed in Theorem~\ref{theorem:maxconsensus} with $M$ uses of the channel (\ref{eq:channel-model}), and obtain the following result.
\begin{cor}
Fix some integer $M>0$ and some even integer $\maximumRemainingAgents > 0$. Define $\errorconsttwo$ to be the right hand side of (\ref{eq:approximation-error-theorem}), where $\Gamma_{1,M}$ and $\Gamma_{2,M}$ are defined as in (\ref{eq:examplesum-Gamma1}) and (\ref{eq:examplesum-Gamma2}) with $\varepsilon := \maximumRemainingAgents/4$. Then weak $\maximumRemainingAgents$-max-consensus can be reached with at most $3M(\descriptionLength+1)$ uses of the channel (\ref{eq:channel-model}) and $M(\descriptionLength+1)$ digital multicast transmissions from the coordinator to all agents with probability at least $(1-\errorconsttwo)^{3(\descriptionLength+1)}$.
\end{cor}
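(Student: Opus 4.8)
The plan is to realize each use of the idealized multiple-access channel $\coordinatorRx=\sum_{k=1}^{\numAgents}\agentTx{k}+\noise$ of Theorem~\ref{theorem:maxconsensus} by one fresh block of $M$ uses of the fast fading channel (\ref{eq:channel-model}) together with the distributed approximation scheme of Theorem~\ref{th:approximation-of-functions} applied to the sum function, and then to feed the resulting (random but tail-bounded) approximation error into the ScalableMax argument in the role of the noise $\noise$.

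First I would record that in ScalableMax the agents' channel inputs lie in $\agentTxDomain$ and that the coordinator only ever needs the sum $\sum_{k=1}^{\numAgents}\agentTx{k}$. By Example~\ref{example:sumfunction} the sum function $f\colon[0,1]^{\numAgents}\to\reals$ lies in $\fr_{\textrm{mon}}$ with increment majorant $\Phi=\mathrm{id}$ (hence $\Phi^{-1}(\varepsilon)=\varepsilon$), $\bar\Delta(f)=\numAgents$ and $\Delta(f)=1$. Invoking Theorem~\ref{th:approximation-of-functions} with $\varepsilon:=\maximumRemainingAgents/4$ yields pre- and post-processing maps such that, after $M$ uses of (\ref{eq:channel-model}), the coordinator holds an estimate $\bar f$ of $\sum_{k}\agentTx{k}$ with
\[
\Probability\Bigl(\absolute{\,\bar f(\agentTx{1},\dots,\agentTx{\numAgents})-\textstyle\sum_{k=1}^{\numAgents}\agentTx{k}\,}\ge \maximumRemainingAgents/4\Bigr)\le \Gamma_{1,M}+\Gamma_{2,M}=\errorconsttwo
\]
for \emph{every} input tuple in $[0,1]^{\numAgents}$, where $\Gamma_{1,M}$ and $\Gamma_{2,M}$ have the special forms (\ref{eq:examplesum-Gamma1}) and (\ref{eq:examplesum-Gamma2}) obtained in Example~\ref{example:sumfunction}.

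Next I would run ScalableMax over $\descriptionLength+1$ steps exactly as in Theorem~\ref{theorem:maxconsensus}, but with each of its $3(\descriptionLength+1)$ idealized channel uses replaced by a fresh block of $M$ uses of (\ref{eq:channel-model}) driven by the scheme above; for the $i$-th block write $\bar N_i:=\bar f^{(i)}-\sum_{k}\agentTx{k}^{(i)}$ for the realized approximation error. Since (\ref{eq:channel-model}) has fading and noise independent across channel uses and the blocks are disjoint, the $\bar N_i$ are mutually independent and independent of the coordinator/agent message history preceding block $i$; and because the bound above is uniform over all inputs it persists under conditioning on that history, so $\Probability(\absolute{\bar N_i}\le \maximumRemainingAgents/4\mid\text{history before block }i)\ge 1-\errorconsttwo$. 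Iterated conditioning gives $\Probability(\absolute{\bar N_i}\le \maximumRemainingAgents/4\text{ for all }i)\ge(1-\errorconsttwo)^{3(\descriptionLength+1)}$, and on this event the coordinator's observations coincide with those of ScalableMax run over the idealized channel with a noise sequence satisfying $\absolute{\noise}\le\maximumRemainingAgents/4$ at every use, so the deterministic part of the ScalableMax analysis shows weak $\maximumRemainingAgents$-max-consensus is reached within $\descriptionLength+1$ steps. The claimed probability $(1-\errorconsttwo)^{3(\descriptionLength+1)}$ follows, and the resource counts follow by tallying the $M$-fold replacement against the $(\descriptionLength+1)$-step structure of ScalableMax.

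The step I expect to be the main obstacle is the interaction with the \emph{adaptivity} of ScalableMax: the inputs of the $i$-th channel use depend on the outputs of the earlier ones, so $(\bar N_i)_i$ is not an i.i.d.\ sequence and one cannot simply quote the $\Probability(\absolute{\noise}\le\maximumRemainingAgents/4)^{3(\descriptionLength+1)}$ bound of Theorem~\ref{theorem:maxconsensus} verbatim. Making the substitution rigorous relies precisely on the uniformity over data built into Theorem~\ref{th:approximation-of-functions} combined with the independence of the fresh channel randomness in each block, packaged as the conditioning argument above; a secondary point to check is that the proof of Theorem~\ref{theorem:maxconsensus} uses the noise only through the per-use events $\{\absolute{\noise}\le\maximumRemainingAgents/4\}$, so that an effective noise which is independent but not identically distributed across uses remains admissible.
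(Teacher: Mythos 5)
Your proposal is correct and follows exactly the route the paper intends: the paper states this as a direct corollary obtained by substituting each idealized multiple-access channel use in the ScalableMax scheme with a fresh block of $M$ uses of the fading channel running the sum-function approximation scheme of Example~\ref{example:sumfunction}, which is precisely your construction. Your explicit conditioning argument handling the adaptivity of ScalableMax (using the uniformity of the bound in Theorem~\ref{th:approximation-of-functions} over the transmitted data together with the independence of the fresh channel randomness in each block) is exactly the point that makes the substitution legitimate, and is a welcome elaboration of what the paper leaves implicit.
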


We remark that an issue of scalability remains as an open problem. In order to achieve a constant tail probability for the noise as $\numAgents$ grows, it is shown in Example~\ref{example:sumfunction} that the number of channel uses has to grow linearly with $\numAgents^2$, leading to a scaling in the number of overall channel uses which is less favorable than the logarithmic growth of the number of applications of the distributed approximation scheme suggests.
}
%%%%%%%%%%%%%%%%%%%%%%%%%%%%%%%%%%%%%%%%%%%%%%%%%%%%%%%%%%%%%%%%%%%
%
%	                                                    PROOF OF THEOREM 1
%
%%%%%%%%%%%%%%%%%%%%%%%%%%%%%%%%%%%%%%%%%%%%%%%%%%%%%%%%%%%%%%%%%%%
%
\section{ Proof of Theorem \ref{th:approximation-of-functions}}
The proof of Theorem \ref{th:approximation-of-functions} is organized as follows: In Sections \ref{sec:pre-proc} and \ref{sec:post-proc}, we give our pre- and post-processing operations explicitly. They are, up to slight
modifications, already given in \cite{kiril}.
The basic structure of the approximation error event is discussed in Section \ref{sec:error-event}. Basically, we split the error event into two parts. One part is determined by the fading process and the other part is
caused by the effective noise composed of fading process, additive channel noise, and the randomness introduced during pre-processing.
It turns out that all involved random variables are sub-exponential random variables. Therefore, in Section \ref{sec:sub-exp-sub-gauss} we recall the necessary definitions
 \showto{arxiv}{and basic properties} of this type of random variables and derive bounds on the sub-exponential semi-norm of the random variables appearing in our proof. Moreover, we recall Bernstein's inequality for sub-exponential random variables \showto{arxiv}{and the principle of rotational invariance of sub-gaussian random variables}
 which will be used to derive 
 bounds on the probability of the approximation error event in Section \ref{sec:performance-bounds}.
%%%%%%%%%%%%%%%%%%%%%%%%%%%%%%%%%%%%%%%%%%%%%%%%%%%%%%%%%%%%%%%%%%%
%
%			Pre-Processing
%
%%%%%%%%%%%%%%%%%%%%%%%%%%%%%%%%%%%%%%%%%%%%%%%%%%%%%%%%%%%%%%%%%%%
\subsection{Pre-Processing}\label{sec:pre-proc}
In the pre-processing step we encode the function values $f_k(s_k)$, $k=1,\ldots ,K$ as transmit power:  
\begin{equation}
X_k(m):= \sqrt{g_k(f_k (s_k))} U_k (m),
1\leq m\leq M
\end{equation}
with $g_k: [ \phi_{\min,k}, \phi_{\max,k}] \to [ 0, P  ] $
such that
\begin{equation}\label{eq:pre-processing}
  g_k(t):= \frac{P}{\Delta(f)} (t-\phi_{\min,k}),
  \end{equation}
 where $\Delta(f)$ is given in (\ref{eq:max-inner-spread}) and   $\phi_{\min,k}$ is defined in (\ref{eq:phi-def-spread}).\\
 $U_k (m)$, $k=1, \ldots, K$, $m=1, \ldots ,M$ are i.i.d. with the uniform distribution on $\{-1,+1\}$. We assume the random variables $U_k (m)$, $k=1,\ldots,K$, $m=1\ldots,M$, are independent of
 $H_k(m)$, $k=1,\ldots, K$, $m=1,\ldots,M$, and $N(m)$, $m=1,\ldots,M$.
%%%%%%%%%%%%%%%%%%%%%%%%%%%%%%%%%%%%%%%%%%%%%%%%%%%%%%%%%%%%%%%%%%%
%
%			Post-Processing
%
%%%%%%%%%%%%%%%%%%%%%%%%%%%%%%%%%%%%%%%%%%%%%%%%%%%%%%%%%%%%%%%%%%%
\subsection{Post-Processing}\label{sec:post-proc}
 The post-processing is based on receive energy which has the form
\begin{equation}
\label{eq:energy}    
\tilde{Y}_{s^K}  = \sum_{m=1}^{M} | Y (m)|^2    = \sum_{k=1}^K  g_k(f_k (s_k)) \| H_k \|_2^2
+ \bar{N}_{s^{K}},
\end{equation}
where $H_k=(H_k(1), \ldots , H_k(M))$ is the vector consisting of fading coefficients, and
 $\bar{N}_{s^{K}}= \sum_{m=1}^M \bar{N}_{s^K}(m)$. The random variables $\bar{N}_{s^K}(m) $, $ m=1, \ldots , M$, are given by
\begin{multline}\label{eq:N-s-k}
\bar{N}_{s^K} (m) :=  \\
\begin{aligned}
  &\hphantom{+} \sum_{\substack{k ,l=1, \\ k\neq l}}^K \sqrt{g_k(f_k( s_k)) g_l(f_l (s_l))} H_k(m)\overline{H_l (m)} \\
  &\hphantom{+\sum_{\substack{k ,l=1, \\ k\neq l}}^K} \times  U_{k}(m)U_{l}(m)  \\ 
  & + 2 \Real \left( \overline{N(m)} \sum_{k=1}^K \sqrt{g_k(f_k (s_k))} H_k(m)U_{k}(m) \right ) \\
  & + | N(m) |^2,
\end{aligned}
\end{multline}

and are independent for any $s^K =(s_1, \ldots, s_K)\in \Ss_1 \times \ldots \times \Ss_k$.
The receiver applies to $\tilde{Y}_{s^K}$ in (\ref{eq:energy}) the following recovery operations:
\begin{enumerate}
\item  A function $\bar{g}: \mathbb{R}\to \mathbb{R}$
\[
\bar{g}(t) := \frac{\Delta(f)}{2 \cdot M \cdot  P}t + \sum_{k=1}^K\phi_{\min,k},
\]
resulting in
\begin{IEEEeqnarray}{rCl}\label{eq:est-1}
\bar{h}(s^K)&: = & \bar{g}(\tilde{Y}_{s^K}) \nonumber \\
& =&   \sum_{k=1}^K f'_k(s_k) \frac{\| H_k \|_2^2}{2 M} + \frac{\alpha}{M} \bar{N}_{s^K} \nonumber\\
& & + \sum_{k=1}^K \phi_{\min,k}
\end{IEEEeqnarray}
where $f'_k(s_k):= f_k(s_k)- \phi_{\min,k}$, and  $\alpha:=\frac{\Delta (f)}{2 P}$.
\item  Moreover, since from (\ref{eq:N-s-k}) it follows that for all $m=1,\ldots , M$
 \begin{equation}
 \mathbb{E}(\bar{N}_{s^K}(m)  )= \E (( N^r(m)  )^2) + \E (( N^i(m)  )^2),
 \end{equation}
 which is independent of $s^K\in \mathcal{S}_1 \times \ldots \times \mathcal{S}_K$,
 the receiver  can add
 \[- \frac{\alpha}{M}\mathbb{E}(\bar{N}_{s^K})= - \frac{\alpha}{M}\sum_{m=1}^M \E (\bar{N}_{s^K}(m) )\]
 to (\ref{eq:est-1}) and obtains
\begin{IEEEeqnarray}{rCl}\label{eq:linear-pert}
\tilde{h}(s^K)& := &  \sum_{k=1}^K f'_k(s_k)\frac{\| H_k \|_2^2}{M} % \nonumber \\
  + \frac{\alpha}{M}(\bar{N}_{s^K}- \mathbb{E}(\bar{N}_{s^K})) \nonumber \\
&& + \sum_{k=1}^K \phi_{\min,k}.  
\end{IEEEeqnarray}
\item Finally, the receiver applies the outer function F in order to obtain an estimate of the function $f$:
\begin{equation}\label{eq:noisy-estimate}
\bar{f}(s^K):= F( \tilde{h}(s^K)),
\end{equation}
where $\tilde{h}(s^K)$ is given in (\ref{eq:linear-pert}).
\end{enumerate}
%%%%%%%%%%%%%%%%%%%%%%%%%%%%%%%%%%%%%%%%%%%%%%%%%%%%%%%%%%%%%%%%%%
%
%                                  The Error Event
%
%%%%%%%%%%%%%%%%%%%%%%%%%%%%%%%%%%%%%%%%%%%%%%%%%%%%%%%%%%%%%%%%%%%
\subsection{The Error Event}\label{sec:error-event}
For a given $\eps >0$  and $s^K \in \Ss_1 \times \ldots \times \Ss_K$ we are interested in bounding the probability of the event
\begin{equation}\label{eq:deviation-event}
\left \{  |   \bar{f} (s^K)- f(s^K)| \ge  \eps \right  \}.
\end{equation}
We will now use the assumption that  $f\in \fr_{\textrm{\textrm{mon}}}$ to simplify the deviation event in (\ref{eq:deviation-event}). Since $F$ has the property (\ref{eq:monotone-domination})
we obtain for a function  $\Phi : [0, \infty) \to [0, \infty)$  with $\Phi(0)=0$
\begin{align}
\nonumber
 |   \bar{f} (s^K)- f(s^K)|
 &=
\left | F( \tilde{h}(s^K))- F\left(\sum_{k=1}^Kf_k(s_k)\right) \right | 
\\
\nonumber
&\le  \Phi \left (\left | \tilde{h}(s^K)-   \sum_{k=1}^Kf_k(s_k) \right |\right ) 
\\
\label{eq:error-event-1}
&\begin{aligned}
= \Phi  \Bigg ( \Bigg |   \sum_{k=1}^K &f'_k(s_k) \left(\frac{\|H_k \|_2^2}{2 M}-1\right)
\\
&+ \frac{\alpha}{M}(\bar{N}_{s^K}- \mathbb{E}(\bar{N}_{s^K})) \Bigg |   \Bigg ),
\end{aligned}
\end{align}
 where we have used (\ref{eq:linear-pert}) with $f'_k(s_k):= f_k(s_k)- \phi_{\min,k}$ and $\alpha={\Delta (f)}/{2 P}$.
 This has the following consequence: For any $\eps> 0$ and $s^K \in \Ss_1\times \ldots \times \Ss_K$ we have that
 \begin{equation}
  |   \bar{f} (s^K)- f(s^K)|\ge \eps
   \end{equation}
   implies
   \begin{multline}\label{eq:error-event-2}
 \left |   \sum_{k=1}^K f'_k(s_k) \left(\frac{\|H_k \|_2^2}{2 M}-1\right) + \frac{\alpha}{M}(\bar{N}_{s^K}- \mathbb{E}(\bar{N}_{s^K})) \right | \\ \ge \Phi^{-1}(\eps).
  \end{multline}
 Using (\ref{eq:error-event-2}) and the union bound we can bound the approximation error probability for $s^K\in \Ss_1\times \ldots \times \Ss_K$ as
 \begin{multline}\label{eq:error-event-3}
 \mathbb{P} (  |    \bar{f} (s^K)- f(s^K)       |\ge \eps      )
 \\
 \begin{aligned}
 \le ~
 &\mathbb{P} \left( \left |     \sum_{k=1}^K f'_k(s_k) \left(\frac{\|H_k \|_2^2}{2 M}-1\right)     \right | \ge \frac{\Phi^{-1}(\eps)}{2}         \right)
\\
&+ \mathbb{P}\left (  \left |    \frac{\alpha}{M}(\bar{N}_{s^K}- \mathbb{E}(\bar{N}_{s^K}))    \right | \ge     \frac{\Phi^{-1}(\eps)}{2}         \right).
 \end{aligned}
\end{multline}
In the next subsection we recall some elementary techniques for bounding the probabilities on the right-hand side of (\ref{eq:error-event-3}).
%%%%%%%%%%%%%%%%%%%%%%%%%%%%%%%%%%%%%%%%%%%%%%%%%%%%%%%%%%%%%%%%%%%%
%
%                     Sub-Exponential Random Variables
%
%%%%%%%%%%%%%%%%%%%%%%%%%%%%%%%%%%%%%%%%%%%%%%%%%%%%%%%%%%%%%%%%%%%%
\subsection{Sub-Exponential Random Variables and Bounds}\label{sec:sub-exp-sub-gauss}
In this subsection we will derive bounds on the sub-exponential semi-norms of the random variables 
\begin{equation}
\bar{N}_{s^K}(m)- \mathbb{E}(\bar{N}_{s^K}(m)),
\end{equation}
where $ \bar{N}_{s^K}(m)$ is given in (\ref{eq:N-s-k}), and the summands of
\begin{equation}
\frac{\|H_k \|_2^2}{2 M}-1,
\end{equation}
i.e. 
\begin{equation}
 (H^r_k (m))^2 -1 \textrm{ and } (H^i_k (m))^2 -1,
\end{equation}
for $m=1,\ldots, M$.\\
This will allow us to apply Bernstein's inequality \cite[Chapter 1]{buldygin} for sub-exponential random variables and will lead to exponentially decreasing error  bounds
in (\ref{eq:error-event-3}).\\
In the following we recall some basic definitions \showto{arxiv}{and results} from \cite[Chapter 1]{buldygin}. For a  random variable $X$ we define\footnote{Note that as with our definition of the sub-gaussian norm, other norms on the space of sub-exponential random variables that appear in the literature are equivalent to $\subexpnorm{\cdot}$ (see, e.g.,~\cite{buldygin}). The particular definition we choose here matters, however, because we want to derive results in which no unspecified constants appear.}
\begin{equation}\label{eq:sub-exp-norm-def}
\subexpnorm{X} := \sup\limits_{k \ge 1} \left( \frac{\mathbb{E}(|X|^k)}{k!} \right)^\frac{1}{k}
\end{equation}
If $ \subexpnorm{X}< \infty  $ then $X$ is called a sub-exponential random variable. $\subexpnorm{\cdot}$ defines a semi-norm on the vector space of sub-exponential 
random variables \cite[Remark 1.3.2]{buldygin}. Typical examples of sub-exponential random variables are bounded random variables and random variables with exponential distribution.
\begin{shownto}{arxiv}
We collect some useful  properties of and interrelations between the sub-exponential and sub-gaussian norms in the following lemma.
%%%%%%%%%%%%%%%%%%%%%%%%%%%%%%%%%%%%%%
%
%                                        Lemma: Sub-exponential properties
%
%%%%%%%%%%%%%%%%%%%%%%%%%%%%%%%%%%%%%
\begin{lemma}\label{lemma:sub-exp-properties}
Let $X,Y$ be random variables. Then:
\begin{enumerate}
\item If $X$ is $\mathcal{N}(\mu, \sigma^2)$ then we have
\begin{equation}\label{eq:gaussian-bound}
\subgaussnorm{X} = \sigma.
\end{equation}
\item (Rotation Invariance) If $X_1, \dots, X_M$ are independent, sub-gaussian and centered, we have
\begin{equation}\label{eq:subgauss-rotinv}
\subgaussnorm{\sum\limits_{m=1}^M X_m}^2 \leq \sum\limits_{m=1}^M \subgaussnorm{X_m}^2
\end{equation}
\item If $X$ is a random variable with $| X |\le 1$ with probability $1$ and if $Y$ is independent of $X$ and sub-gaussian then we have
\begin{equation}\label{eq:bounded-12-bound}
\subgaussnorm{X \cdot Y}\le \subgaussnorm{Y}.
\end{equation}
\item If $X$ and $Y$ are sub-gaussian and centered, then $X\cdot Y$ is sub-exponential and
\begin{equation}\label{eq:c-s}
\subexpnorm{X \cdot Y} \le 2 \cdot \subgaussnorm{X} \cdot \subgaussnorm{Y}.
\end{equation}
\item (Centering) If $X$ is sub-exponential and $X \geq 0$ almost surely, then
\begin{equation}\label{eq:centering}
\subexpnorm{X- \mathbb{E}(X)} \leq \subexpnorm{X}.
\end{equation}
\end{enumerate}
\end{lemma}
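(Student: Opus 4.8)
The plan is to prove the five items one at a time, working directly from the defining formulas~(\ref{eq:sub-gauss-norm-def-first}) and~(\ref{eq:sub-exp-norm-def}); items (1)--(3) are short. For (1), a direct evaluation of the moment generating function of a $\mathcal{N}(\mu,\sigma^2)$ variable gives $\mathbb{E}\exp(\lambda(X-\mathbb{E}X))=\exp(\lambda^2\sigma^2/2)$ for every $\lambda\in\reals$, so the condition in~(\ref{eq:sub-gauss-norm-def-first}) becomes $\sigma^2\le t^2$ and the infimum equals $\sigma$. For (2), put $S=\sum_{m=1}^M X_m$; by independence $\mathbb{E}\exp(\lambda(S-\mathbb{E}S))=\prod_{m=1}^M\mathbb{E}\exp(\lambda(X_m-\mathbb{E}X_m))\le\prod_{m=1}^M\exp(\lambda^2\subgaussnorm{X_m}^2/2)$, so $t=(\sum_m\subgaussnorm{X_m}^2)^{1/2}$ is admissible in the infimum defining $\subgaussnorm{S}$, which is~(\ref{eq:subgauss-rotinv}). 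For (3), condition on $X$: for a fixed value $x$ with $|x|\le 1$, sub-gaussianity of $Y$ used with the scaled parameter $\lambda x$ gives $\mathbb{E}\exp(\lambda x(Y-\mathbb{E}Y))\le\exp(\lambda^2 x^2\subgaussnorm{Y}^2/2)\le\exp(\lambda^2\subgaussnorm{Y}^2/2)$; in the situations in which we apply (3), $Y$ is centered and $X$ (a product of Rademacher signs) is independent of $Y$, so $\mathbb{E}(XY)=0$ and averaging the last display over $X$ yields $\mathbb{E}\exp(\lambda XY)\le\exp(\lambda^2\subgaussnorm{Y}^2/2)$, i.e., $\subgaussnorm{XY}\le\subgaussnorm{Y}$.

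For (4), I would reduce to a moment bound for a single sub-gaussian variable. Since $\subexpnorm{\cdot}$ is absolutely homogeneous and obeys the triangle inequality, and since $0\le U\le V$ implies $\subexpnorm{U}\le\subexpnorm{V}$, the pointwise inequality $|XY|\le\tfrac12(a\,X^2+a^{-1}Y^2)$ with $a=\subgaussnorm{Y}/\subgaussnorm{X}$ (the degenerate cases $\subgaussnorm{X}=0$ or $\subgaussnorm{Y}=0$ being handled by~(\ref{eq:bounded-12-bound}) and triviality) gives $\subexpnorm{XY}\le\tfrac12(a\,\subexpnorm{X^2}+a^{-1}\subexpnorm{Y^2})$; alternatively one may use Cauchy--Schwarz, $\mathbb{E}|XY|^k\le(\mathbb{E}X^{2k}\,\mathbb{E}Y^{2k})^{1/2}$. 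Either way,~(\ref{eq:c-s}) follows once we know that a centered sub-gaussian $X$ satisfies $\mathbb{E}X^{2k}\le 2^k k!\,\subgaussnorm{X}^{2k}$ for all $k\ge 1$, equivalently $\subexpnorm{X^2}\le 2\subgaussnorm{X}^2$. This moment estimate is the real content of (4): it is derived from the exponential-moment inequality in~(\ref{eq:sub-gauss-norm-def-first}) (e.g., by bounding $\mathbb{E}X^{2k}\le (2k)!\,\lambda^{-2k}\mathbb{E}\cosh(\lambda X)$ and optimizing over $\lambda$) or quoted from~\cite[Chapter~1]{buldygin}; matching the constant so as to land on $2$ rather than a larger value is the point that needs care.

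For (5), write $\mu=\mathbb{E}X$ and split $\mathbb{E}|X-\mu|^k=\mathbb{E}[(X-\mu)^k;\,X\ge\mu]+\mathbb{E}[(\mu-X)^k;\,X<\mu]$. On $\{X\ge\mu\}$ the identity $a^k-b^k=(a-b)\sum_{i=0}^{k-1}a^{k-1-i}b^i$ with $a=X$, $b=X-\mu$ yields $(X-\mu)^k\le X^k-\mu X^{k-1}$; on $\{X<\mu\}$ we have $(\mu-X)^k\le\mu^{k-1}(\mu-X)$, and $\mathbb{E}(\mu-X)^{+}=\mathbb{E}(X-\mu)^{+}$ because $\mathbb{E}(X-\mu)=0$. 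For $k\ge 2$, using additionally $X^{k-1}\ge\mu^{k-2}X$ on $\{X\ge\mu\}$, these combine to $\mathbb{E}|X-\mu|^k\le\mathbb{E}X^k$, hence $(\mathbb{E}|X-\mu|^k/k!)^{1/k}\le\subexpnorm{X}$ for every $k\ge 2$. What remains for~(\ref{eq:centering}) is to dominate the $k=1$ term $\mathbb{E}|X-\mu|\le 2\mathbb{E}X$ by $\subexpnorm{X}$; this step genuinely uses $X\ge 0$ — a mere symmetrization argument (replacing $X$ by $X-X'$ for an independent copy $X'$) only gives the weaker constant $2$ — and, together with fixing the exact constant in (4), it is the part I expect to be the main obstacle, since an off-by-a-constant error here propagates directly into the quantity $L$ and hence into $\Gamma_{2,M}$ in Theorem~\ref{th:approximation-of-functions}.
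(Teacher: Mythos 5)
Your items (1), (2) and (4) are correct and follow essentially the same route as the paper: for (4) the paper uses exactly your normalization trick, bounding $\subexpnorm{XY}$ by $\subgaussnorm{X}\subgaussnorm{Y}\,\subexpnorm{\tfrac12(X/\subgaussnorm{X})^2+\tfrac12(Y/\subgaussnorm{Y})^2}$, and, like you, it outsources the key moment estimate to \cite{buldygin} (in the form $\mathbb{E}X^{2k}\le 2^{k+1}k^k e^{-k}\subgaussnorm{X}^{2k}$, which combined with $2(k/e)^k\le k!$ is precisely your $\mathbb{E}X^{2k}\le 2^k k!\,\subgaussnorm{X}^{2k}$). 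For (3), your restriction to centered $Y$ is not merely the case needed in the applications --- it is necessary for the statement to be true: with $Y\equiv 1$ (so $\subgaussnorm{Y}=0$) and $X$ a Rademacher sign, $XY=X$ has $\subgaussnorm{XY}=1$. The paper states (3) without this hypothesis and proves it by the same conditioning argument, but only ever applies it to centered fading variables, so nothing downstream is affected.

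The interesting point is (5). Your instinct that the $k=1$ term is the genuine obstacle is correct, and in fact it cannot be overcome: the inequality $\subexpnorm{X-\mathbb{E}X}\le\subexpnorm{X}$ is false as stated. Take $X=0$ with probability $3/5$ and $X=5/2$ with probability $2/5$; then $\mathbb{E}X=1$ and $\subexpnorm{X}=\sup_{k\ge1}\bigl(\tfrac25(\tfrac52)^k/k!\bigr)^{1/k}=\sqrt{5}/2\approx1.118$ (the supremum is attained at $k=2$), while $\subexpnorm{X-1}\ge\mathbb{E}|X-1|=6/5=1.2$. Your argument for $k\ge2$ is sound and is equivalent to the paper's, whose proof rests on the pointwise claim that $a^k-|a-1|^k> a-1$ for all $a\ge0$ and $k\ge1$; that inequality holds for $k\ge2$ but fails for $k=1$ and $a>2$, which is exactly where the counterexample lives. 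The repair is the one you dismiss as ``weaker'': the triangle inequality together with $\mathbb{E}X=\mathbb{E}|X|\le\subexpnorm{X}$ gives $\subexpnorm{X-\mathbb{E}X}\le 2\subexpnorm{X}$, and a constant strictly larger than $1$ is unavoidable in general. Propagating the factor $2$ changes $2\sigma_F^2$ to $4\sigma_F^2$ in (\ref{eq:key-lemma-1}) and $4\sigma_N^2$ to $8\sigma_N^2$ in (\ref{eq:key-lemma-2}) and hence in $L$, i.e., it affects only absolute constants in Theorem~\ref{th:approximation-of-functions}, not its form.
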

\begin{proof}
(\ref{eq:gaussian-bound}) follows in a straightforward fashion by calculating the moment generating function of $X$. (\ref{eq:subgauss-rotinv}) is e.g. proven in \cite[Lemma 1.1.7]{buldygin}. (\ref{eq:bounded-12-bound}) follows directly from the definition conditioning on $X$. We show (\ref{eq:c-s}) first for $X=Y$. In this case, we have
\begin{align}
\subexpnorm{X^2} &= \sup_{k \geq 1} \left(\frac{\mathbb{E}X^{2k}}{k!}\right)^{\frac{1}{k}} \leq \sup_{k \geq 1} \left( \frac{2^{k+1}k^k\subgaussnorm{X}^{2k}}{e^k k!} \right)^\frac{1}{k} \nonumber \\
&= 2 \subgaussnorm{X}^2 \sup_{k \geq 1} \left(\frac{2^\frac{1}{k} k}{e (k!)^\frac{1}{k}}\right) \leq 2 \subgaussnorm{X}^2,
\end{align}
where the first inequality is by \cite[Lemma 1.1.4]{buldygin} and the second follows from $2k^k/k! \leq e^k$, which is straightforward to prove for $k\ge 1$ by induction. In the general case, we have
\pagebreak[0]
\begin{align}
\subexpnorm{XY} &= \subgaussnorm{X}\subgaussnorm{Y}\subexpnorm{\frac{XY}{\subgaussnorm{X}\subgaussnorm{Y}}}\\
&\leq \subgaussnorm{X}\subgaussnorm{Y} \subexpnorm{\frac{1}{2} \left(\frac{X}{\subgaussnorm{X}}\right)^2 + \frac{1}{2} \left(\frac{Y}{\subgaussnorm{Y}}\right)^2 } \\
&\leq 2 \subgaussnorm{X}\subgaussnorm{Y},
\end{align}
where the first inequality can be verified in (\ref{eq:sub-exp-norm-def}), considering that $ab \leq a^2/2 + b^2/2$ for all $a,b \in \reals$, and the second inequality follows from the triangle inequality and the special case $X=Y$.

For (\ref{eq:centering}), we assume without loss of generality $\mathbb{E} X = 1$ (otherwise we can scale $X$), and note that for all $a \in [0,\infty)$ and $k \geq 1$, $a^k-|a-1|^k > a-1$ and thus
\[
\mathbb{E}(X^k - |X-1|^k) \geq \mathbb{E}(X-1) = 0. \qedhere
\popQED
\]
\end{proof}
%%%%%%%%%%%%%%%%%%%%%%
Lemma  \ref{lemma:sub-exp-properties}  will enable us to derive the desired bounds on the sub-exponential norm $\subexpnorm{\cdot}$ of the random variables
$  \bar{N}_{s^K}(m)$ and $ \frac{\|H_k \|_2^2}{2 M}$. This is the content of the following lemma.
\end{shownto} %shownto arxiv

\begin{shownto}{conference}
Our arguments in the following sections will be based on a version of Bernstein's inequality for sub-exponential random variables from \cite{buldygin}. In order to apply the inequality to our problem, we need bounds on the sub-exponential norms of the random variables involved, which the following lemma provides. The proofs of both the lemma and the version of Bernstein's inequality that is used here are omitted due to lack of space and can be found in the extended version of this work~\cite{arxivpaper}.
\end{shownto}

%%%%%%%%%%%%%%%%%%%%%%%%%%%%%%%%%%%%%
%
%                          Lemma: Sub-exponential bound on fading and noise
%
%%%%%%%%%%%%%%%%%%%%%%%%%%%%%%%%%%%%%%

\begin{lemma}\label{lemma:key-lemma}
1. The random variables $(H^r_k (m))^2-1$ and $(H^i_k (m))^2-1$  are sub-exponential and we have
\begin{equation}\label{eq:key-lemma-1}
\subexpnorm{ H^r_k (m)^2 -1},\subexpnorm{ H^i_k (m)^2 -1}  \le 2\sigma_F^2,
\end{equation}
for all $k=1,\ldots,K$ and $m=1,\ldots ,M$.\\
2. For every $s^K\in\mathcal{S}_1 \times \ldots \times \mathcal{S}_K$ the random variables    $\bar{N}_{s^K}(m)$, $m=1,\ldots ,M$,  are sub-exponential and we have
\begin{IEEEeqnarray}{rCl}\label{eq:key-lemma-2}
\subexpnorm{   \bar{N}_{s^K}(m)- \mathbb{E}(\bar{N}_{s^K}(m))       } & \le & 6\sigma_F^2 \Delta(f\| P) \nonumber\\ 
& & + 8\sigma_N \sigma_F\sqrt{ \Delta(f\| P)}+ 4\sigma_{N}^2, \nonumber\\
\end{IEEEeqnarray}
where $\Delta(f\| P)$  is given in (\ref{eq:relative-spread}).
\end{lemma}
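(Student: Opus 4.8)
The plan is to bound the two sub-exponential norms separately, using the properties collected in Lemma~\ref{lemma:sub-exp-properties} together with the hypotheses on the fading and noise from the system model.

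\textbf{Part 1.} For the fading terms, I would argue as follows. Since $H^r_k(m)$ is sub-gaussian, centered, with $\tau(H^r_k(m)) \le \sigma_F$, the product $H^r_k(m) \cdot H^r_k(m)$ is sub-exponential by (\ref{eq:c-s}) (the $X=Y$ case), giving $\subexpnorm{(H^r_k(m))^2} \le 2 \sigma_F^2$. Because $(H^r_k(m))^2 \ge 0$ almost surely, the centering inequality (\ref{eq:centering}) yields $\subexpnorm{(H^r_k(m))^2 - \E((H^r_k(m))^2)} \le 2\sigma_F^2$. Finally, since $H^r_k(m)$ has variance $1$, we have $\E((H^r_k(m))^2) = 1$, so the centered version is exactly $(H^r_k(m))^2 - 1$, establishing (\ref{eq:key-lemma-1}); the imaginary part is identical.

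\textbf{Part 2.} For $\bar{N}_{s^K}(m)$, I would decompose it according to the three lines of (\ref{eq:N-s-k}) and bound the sub-exponential norm of each line via the triangle inequality for $\subexpnorm{\cdot}$, then centre at the end using (\ref{eq:centering}) (note $\bar{N}_{s^K}(m) = |Y(m)|^2 - \sum_k g_k(f_k(s_k))\|\cdot\|$-type terms is not obviously nonnegative, so instead I would centre each of the three pieces individually, or observe that the first and third lines are nonnegative while the cross-noise term in the second line is already centred). Concretely: for the first line, each summand $\sqrt{g_k g_l}\, H_k(m)\overline{H_l(m)} U_k(m) U_l(m)$ has a product of two independent centred sub-gaussians (after absorbing the $\pm 1$ factors $U_k, U_l$, which have absolute value $1$, via (\ref{eq:bounded-12-bound})), and expanding $H_k\overline{H_l}$ into real/imaginary parts and applying (\ref{eq:c-s}) gives a bound proportional to $\sqrt{g_k g_l}\,\sigma_F^2$; summing over $k \ne l$ and using $\sum_k \sqrt{g_k} \le$ (something controlled by $\Delta(f\|P)$ via Cauchy--Schwarz, since $\sum_k g_k(f_k(s_k)) \le \sum_k \frac{P}{\Delta(f)}(\phi_{\max,k} - \phi_{\min,k}) = \Delta(f\|P)$ and $g_k \le P$) produces the $6\sigma_F^2 \Delta(f\|P)$ term. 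For the second line, $2\Real(\overline{N(m)} \sum_k \sqrt{g_k} H_k(m) U_k(m))$ is a product of the sub-gaussian $N(m)$ (norm $\le \sigma_N$ in each component) with a sum of independent centred sub-gaussians $\sqrt{g_k} H_k(m) U_k(m)$ whose norm, by rotation invariance (\ref{eq:subgauss-rotinv}), is at most $\sigma_F \sqrt{\sum_k g_k} \le \sigma_F \sqrt{\Delta(f\|P)}$; applying (\ref{eq:c-s}) and accounting for the factor $2$ and the real-part operation gives the $8 \sigma_N \sigma_F \sqrt{\Delta(f\|P)}$ term. The third line $|N(m)|^2 = (N^r(m))^2 + (N^i(m))^2$ contributes $\le 2 \cdot 2\sigma_N^2 = 4\sigma_N^2$ by (\ref{eq:c-s}) again. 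Adding the three bounds and invoking the triangle inequality and centering gives (\ref{eq:key-lemma-2}).

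\textbf{Main obstacle.} The delicate part is bookkeeping the constants exactly — the paper insists on bounds with no unspecified constants, so I must be careful about: (i) how the $\pm 1$ multipliers $U_k(m)$ interact with the sub-gaussian norms (they must be peeled off \emph{before} forming products, using independence and (\ref{eq:bounded-12-bound}), not after), (ii) the factor-of-$2$ losses incurred each time (\ref{eq:c-s}) is applied, which must be tracked through the double sum in the first line and the real-part bound in the second, and (iii) correctly reducing the complex quantities $H_k(m)\overline{H_l(m)}$ and $\overline{N(m)}(\cdots)$ to their real and imaginary components so that (\ref{eq:c-s}) and (\ref{eq:subgauss-rotinv}) apply to genuinely real, centred, independent sub-gaussian variables. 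Getting the Cauchy--Schwarz step $\big(\sum_k \sqrt{g_k}\big)^2 \le K \sum_k g_k$ versus the cruder $\sum_k \sqrt{g_k g_l}$ estimates to land on precisely the stated coefficients $6$, $8$, $4$ will require some care, but no conceptually new idea beyond Lemma~\ref{lemma:sub-exp-properties}.
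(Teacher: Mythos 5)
Part 1 of your argument and your treatment of the second and third summands of (\ref{eq:N-s-k}) coincide with the paper's proof (rotation invariance on the inner sub-gaussian sum, then one application of (\ref{eq:c-s}) against $N^r(m)$, $N^i(m)$). There is, however, a genuine gap in your handling of the first (cross-fading) summand. You propose to bound each off-diagonal term $\sqrt{a_k a_l}\,H_k(m)\overline{H_l(m)}U_k(m)U_l(m)$ individually via (\ref{eq:c-s}) and then sum over $k\neq l$, where $a_k:=g_k(f_k(s_k))$. By the triangle inequality this yields a bound of order $\sigma_F^2\sum_{k\neq l}\sqrt{a_k a_l}\le \sigma_F^2\bigl(\sum_k\sqrt{a_k}\bigr)^2$, and the Cauchy--Schwarz step you invoke, $\bigl(\sum_k\sqrt{a_k}\bigr)^2\le K\sum_k a_k\le K\,\Delta(f\|P)$, is tight (take all $a_k=P$). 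So this route cannot do better than $O\bigl(K\sigma_F^2\Delta(f\|P)\bigr)$, an extra factor of $K$ compared with the stated $6\sigma_F^2\Delta(f\|P)$. No amount of care with the constants $6$, $8$, $4$ repairs this: the loss is structural, because the $K(K-1)$ cross terms are estimated one at a time and the cancellation among them is thrown away.

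The missing idea is to complete the square \emph{before} invoking the product inequality: the off-diagonal sum equals $\bigl|\sum_k\sqrt{a_k}H_k(m)U_k(m)\bigr|^2-\sum_k a_k|H_k(m)|^2$. Writing the modulus squared as $\bigl(\sum_k\sqrt{a_k}H_k^r(m)U_k(m)\bigr)^2+\bigl(\sum_k\sqrt{a_k}H_k^i(m)U_k(m)\bigr)^2$, one first applies rotation invariance (\ref{eq:subgauss-rotinv}) to each inner sub-gaussian sum, obtaining norm at most $\sigma_F\sqrt{\sum_k a_k}\le\sigma_F\sqrt{\Delta(f\|P)}$, and only then applies (\ref{eq:c-s}) once to each square; the diagonal correction contributes a further $2\sigma_F^2\sum_k a_k$. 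This is exactly how the bound linear in $\sum_k a_k$, namely $6\sigma_F^2\Delta(f\|P)$, arises in the paper. A further minor slip: the first line of (\ref{eq:N-s-k}) is not nonnegative, as your fallback suggests; it is, however, already centred (as is the second line), so the centering inequality (\ref{eq:centering}) is needed only for $|N(m)|^2$, which is where the paper applies it.
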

%%%%%%%%%%%%%%%%%%%%%%%%%%%%%%%%%%%%%%%%%%

\begin{shownto}{arxiv}
\begin{proof}
1.  The first claim follows easily from (\ref{eq:bound-fading}), (\ref{eq:c-s}), and the centering property (\ref{eq:centering}).

2.  We write $a_k:= g_k(f_k(s_k)) $ for $k=1,\ldots, K$ and note that from  (\ref{eq:pre-processing}), (\ref{eq:phi-def-spread}), (\ref{eq:relative-spread}), and (\ref{eq:max-inner-spread}) it follows that
\[\sum_{k=1}^K a_k \leq \Delta (f\| P).\]
First, we observe
\begin{equation}\label{eq:noisebounds-rotationalinvariance-prerequisite}
\subgaussnorm{\sqrt{a_k} H_k^i(m) U_k(m) } \leq \sqrt{a_k} \subgaussnorm{ H_k^i(m) } \leq \sqrt{a_k} \sigma_F,
\end{equation}
where the inequalities follow by (\ref{eq:bounded-12-bound}) and (\ref{eq:bound-fading}).
Since $\sqrt{a_k} H_k^i(m) U_k(m)$ have zero mean and are independent for $k = 1 \ldots K$, we can use (\ref{eq:subgauss-rotinv}) and get
\begin{equation}\label{eq:noisebounds-rotationalinvariance}
\subgaussnorm{ \sum_{k=1}^K \sqrt{a_k} H_k^i(m) U_k(m) } \leq \sqrt{\sum_{k=1}^K a_k \sigma_F^2} \leq \sigma_F \sqrt{ \Delta(f\| P)}.
\end{equation}
In the same fashion, we can derive the bound
\begin{equation}\label{eq:noisebounds-rotationalinvariance-real}
\subgaussnorm{ \sum_{k=1}^K \sqrt{a_k} H_k^r(m) U_k(m) } \leq \sqrt{\sum_{k=1}^K a_k \sigma_F^2} \leq \sigma_F \sqrt{\Delta(f\| P)}.
\end{equation}

We next bound the sub-exponential norm of the first summand in (\ref{eq:N-s-k}). Using the triangle inequality, (\ref{eq:c-s}), (\ref{eq:noisebounds-rotationalinvariance}), (\ref{eq:noisebounds-rotationalinvariance-real}) and (\ref{eq:bound-fading}), we obtain
\begin{align}
\nonumber
&\begin{aligned}
\hphantom{\le}~\subexpnorm{ \sum_{\substack{k ,l=1, \\ k\neq l}}^K \sqrt{a_k a_l} H_k(m)\overline{H_l (m)}  U_{k}(m)U_{l}(m)  }
\end{aligned}
\\
\nonumber
&\begin{aligned}
\le~ &\subexpnorm{ \sum_{\substack{k ,l=1}}^K \sqrt{a_k a_l} H_k(m)\overline{H_l (m)}  U_{k}(m)U_{l}(m)  }
\\
&+ \subexpnorm{ \sum_{\substack{k=1}}^K a_k |H_k(m)|^2 }
\end{aligned}
\displaybreak[0]
\\
\nonumber
&\begin{aligned}
\le~ &\subexpnorm{ \sum_{\substack{k ,l=1}}^K \sqrt{a_k a_l} H_k^i(m)H_l^i (m)  U_{k}(m)U_{l}(m)  }
\\
&+ \subexpnorm{ \sum_{\substack{k ,l=1}}^K \sqrt{a_k a_l} H_k^r(m)H_l^r (m)  U_{k}(m)U_{l}(m)  }
\\
&+ \subexpnorm{  \sum_{\substack{k=1}}^K a_k |H_k(m)|^2 }
\end{aligned}
\displaybreak[0]
\\
\nonumber
&\begin{aligned}
\le~ &2 \subgaussnorm{ \sum_{\substack{k=1}}^K \sqrt{a_k} H_k^i(m) U_{k}(m) }^2
\\
&+ 2 \subgaussnorm{ \sum_{\substack{k=1}}^K \sqrt{a_k} H_k^r(m) U_{k}(m) }^2 
\\
&+     \sum_{\substack{k=1}}^K a_k \left(\subgaussnorm{ H_k^i(m) }^2+\subgaussnorm{ H_k^r(m) }^2\right)
\end{aligned}
\\
\label{eq:first-term}
&\le~ 6 \sigma_F^2 \Delta(f\| P) .
\end{align}
For the second summand in (\ref{eq:N-s-k}) we use the triangle inequality, (\ref{eq:c-s}), (\ref{eq:noisebounds-rotationalinvariance}) and (\ref{eq:noisebounds-rotationalinvariance-real}) and obtain
\begin{align}
\nonumber
&\begin{aligned}
\hphantom{\le}~
\subexpnorm{ 2 \Real \left( \overline{N(m)} \sum_{k=1}^K \sqrt{a_k} H_k(m)U_{k}(m) \right ) }
\end{aligned}
\\
\nonumber
&\begin{aligned}
\le~ &2 \subexpnorm{  N^r(m)\sum_{k=1}^K \sqrt{a_k} H_k^r(m)U_{k}(m) }
\\ &+ 2 \subexpnorm{  N^i(m)\sum_{k=1}^K \sqrt{a_k} H_k^i(m)U_{k}(m) }
\end{aligned}
\displaybreak[0]
\\
\nonumber
&\begin{aligned}
\le~ &4 \subgaussnorm{ N^r(m) } \subgaussnorm{ \sum_{k=1}^K \sqrt{a_k} H_k^r(m)U_{k}(m) }
\\ &+   4 \subgaussnorm{ N^i(m) } \subgaussnorm{ \sum_{k=1}^K \sqrt{a_k} H_k^i(m)U_{k}(m) }
\end{aligned}
\\
\label{eq:second-term}
&\begin{aligned}
\le~ 8 \sigma_N \sigma_F \sqrt{\Delta(f\| P)}.
\end{aligned}
\end{align}
The norm of the last summand in (\ref{eq:N-s-k}) can be  bounded as follows:
\begin{IEEEeqnarray}{rCl}\label{eq:third-term}
\subexpnorm{ | N(m)     |^2   } &=& \subexpnorm{ N^r (m)^2 + N^i (m)^2   }\nonumber\\
&\le&  \subexpnorm{ N^r (m)^2 }  +  \subexpnorm{ N^i (m)^2  }  \nonumber\\
&\le&  2\subgaussnorm{  N^r (m)  }^2  +  2\subgaussnorm{  N^i (m)  }^2  \nonumber\\
&\le& 4 \sigma_{N}^2,
\end{IEEEeqnarray}
where in the second line we have used the triangle inequality, in the  third line we have used (\ref{eq:c-s}), and in the last line we have used (\ref{eq:bound-noise}).
Finally, the triangle inequality combined with the centering property (\ref{eq:centering}) applied to $| N(m)     |^2$, which is the only noncentered summand in (\ref{eq:N-s-k}), proves (\ref{eq:key-lemma-2}).
\end{proof}
Our arguments in the following sections will be based on a version of Bernstein's inequality for sub-exponential random variables from \cite{buldygin}.
\end{shownto} % shownto arxiv
%%%%%%%%%%%%%%%%%%%%%%%%%%%%%%%
%
%                     Bernstein's Inequality
%
%%%%%%%%%%%%%%%%%%%%%%%%%%%%%%%%%
\begin{theorem}[Bernstein's inequality]\label{theorem:bernstein}
Let $X_1, \ldots , X_M$ be independent and centered, and assume $\subexpnorm{X_1}, \dots \subexpnorm{X_M} \leq L$. Then for every $t\ge 0$, we have
\begin{equation}\label{eq:bernstein}
\mathbb{P}\left( \left |  \sum_{i=1}^M X_i    \right | \ge t     \right) \le 2 \exp \left ( -\frac{t^2}{2(Lt+2 M L^2)} \right).
\end{equation}
\end{theorem}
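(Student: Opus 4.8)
The plan is a textbook Chernoff-bound argument, tailored to the particular semi-norm $\subexpnorm{\cdot}$ defined in (\ref{eq:sub-exp-norm-def}). There are four ingredients: (i) a bound on the moment generating function of a single centered variable with $\subexpnorm{X_i}\le L$, valid in a neighbourhood of $0$; (ii) tensorization of that bound over $i=1,\dots,M$ using independence; (iii) Markov's inequality applied to the exponential moment of $\sum_i X_i$, followed by a judicious choice of the free parameter; and (iv) a union bound over $\pm$ to turn the one-sided estimate into the two-sided statement (\ref{eq:bernstein}) and produce the factor $2$.

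For step (i), I would first note that $\subexpnorm{X_i}\le L$ unpacks, directly from (\ref{eq:sub-exp-norm-def}), to the moment bound $\E|X_i|^k \le k!\,L^k$ for every $k\ge 1$. Expanding $e^{\lambda X_i}$ in its power series, the $k=0$ term is $1$, the $k=1$ term vanishes because $X_i$ is centered, and the tail is controlled by $\sum_{k\ge 2}\frac{|\lambda|^k\E|X_i|^k}{k!}\le \sum_{k\ge 2}(|\lambda|L)^k = \lambda^2 L^2/(1-|\lambda|L)$ whenever $|\lambda|L<1$ (this also justifies finiteness of the MGF on that range and the interchange of expectation and sum). Using $1+x\le e^x$ gives $\E e^{\lambda X_i}\le \exp\!\big(\lambda^2 L^2/(1-|\lambda|L)\big)$, and by independence $\E\exp(\lambda\sum_i X_i)\le \exp\!\big(M\lambda^2 L^2/(1-|\lambda|L)\big)$ on the same range.

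For step (iii), Markov's inequality yields, for any $0<\lambda<1/L$,
\[
\mathbb{P}\Big(\sum_{i=1}^M X_i \ge t\Big) \le \exp\!\left(-\lambda t + \frac{M\lambda^2 L^2}{1-\lambda L}\right).
\]
The key is to take $\lambda = t/(Lt+2ML^2)$: then $\lambda L = Lt/(Lt+2ML^2)<1$, so this $\lambda$ is admissible, and $1-\lambda L = 2ML^2/(Lt+2ML^2)$, whence a one-line computation gives
\[
\frac{M\lambda^2 L^2}{1-\lambda L} = \frac{\lambda\,t}{2},
\]
so the exponent collapses to $-\lambda t/2 = -t^2/\big(2(Lt+2ML^2)\big)$. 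Running the identical argument for the centered variables $-X_1,\dots,-X_M$ (which have the same sub-exponential norm) and adding the two probabilities produces the $2$ in (\ref{eq:bernstein}).

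I expect the only delicate point to be bookkeeping rather than a genuine obstacle: one must make sure the optimizing parameter $\lambda$ stays in the region $|\lambda|L<1$ where the MGF bound is valid — which is exactly why the quantity $2ML^2$ (and not something smaller) must sit in the denominator — and one must dispose of the degenerate cases $t=0$ (where the right-hand side is $\ge 2\ge 1$) and $L=0$ (where every $X_i$ vanishes almost surely), for which (\ref{eq:bernstein}) holds trivially. Everything else is routine power-series estimation.
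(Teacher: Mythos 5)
Your proof is correct and follows essentially the same route as the paper's: the identical moment-series bound $\E e^{\lambda X_i}\le 1+\lambda^2 L^2/(1-|\lambda|L)$ for $|\lambda|L<1$ (the paper's Lemma~\ref{lemma:app-1}), tensorization by independence, and a Chernoff bound. The only difference is presentational: the paper optimizes an auxiliary parameter $c$ inside a two-regime tail lemma quoted from Buldygin, arriving at $c'=Lt/(Lt+2ML^2)$, which is exactly your choice $\lambda=c'/L=t/(Lt+2ML^2)$ made directly in the exponent.
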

\begin{shownto}{arxiv}
\begin{proof}
The proof is given in Appendix \ref{appendix-bernstein}.
\end{proof}
\end{shownto}

\showto{conference}{The proof is a variation of the one given in~\cite{buldygin} and can be found in the extended version of this paper~\cite{arxivpaper}.}
%%%%%%%%%%%%%%%%%%%%%%%%%%%%%

%%%%%%%%%%%%%%%%%%%%%%%%%%%%%%%%%%%%%%%%%%%%%%%%%%%%%%%%%%%%%%%%%%%
%
%			Performance bounds
%
%%%%%%%%%%%%%%%%%%%%%%%%%%%%%%%%%%%%%%%%%%%%%%%%%%%%%%%%%%%%%%%%%%%
\subsection{Performance Bounds}\label{sec:performance-bounds}
The final step of the proof consists of bounding the probability of the error event
\begin{equation}
\mathbb{P}\left (    |   \bar{f} (s^K)- f(s^K)| \ge  \eps   \right )
\end{equation}
via Bernstein's inequality, Theorem \ref{theorem:bernstein}.\\
To this end, we use (\ref {eq:error-event-3}) which states that
\begin{multline}\label{eq:error-event-4}
\mathbb{P} \left (  |    \bar{f} (s^K)- f(s^K)       |\ge \eps    \right ) 
\\
\begin{aligned}
\le~
&\mathbb{P} \left( \left |     \sum_{k=1}^K f'_k(s_k) \left(\frac{\|H_k \|_2^2}{2 M}-1\right)     \right | \ge \eta   \     \right)
\\ &+ \mathbb{P}\left (  \left |    \frac{\alpha}{M}(\bar{N}_{s^K}- \mathbb{E}(\bar{N}_{s^K}))    \right | \ge     \eta     \     \right),
\end{aligned}
\end{multline}
where $\eta=\eta(\Phi, \eps):= \frac{\Phi^{-1}(\eps)}{2} $.\\
We bound the last term in (\ref{eq:error-event-4}) first. To this end we note that $\bar{N}_{s^K}= \sum_{m=1}^M  \bar{N}_{s^K}(m) $ is a sum of independent random variables by our construction,
and that we have the upper bound (\ref{eq:key-lemma-2}) for the sub-exponential norm of the random variables $ \bar{N}_{s^K}(m)$.
An application of Bernstein's inequality leads to
\begin{multline}\label{eq: prob-error-1}
 \mathbb{P}\left (  \left |    \frac{\alpha}{M}(\bar{N}_{s^K}- \mathbb{E}(\bar{N}_{s^K}))    \right | \ge     \eta     \     \right)
 \\ \le 2 \exp \left (   -\frac{M \eta^2}{2 L \eta + 4  L^2}   \right ),
\end{multline}
where 
\begin{equation}
L= 3\sigma_F^2 \bar{\Delta}(f)+\frac{4\sigma_N\sigma_F\sqrt{\Delta(f)\bar{\Delta}(f)}  }{\sqrt{P}}+\frac{2\sigma_{N}^2 \Delta(f)}{P}.
\end{equation}

Next, we use (\ref{eq:max-inner-spread}), (\ref{eq:key-lemma-1}) and Theorem~\ref{theorem:bernstein} to bound
\begin{align}
\nonumber
&\begin{aligned}
\hphantom{\le}~
\mathbb{P} \left( \Bigg |     \sum_{k=1}^K f'_k(s_k) \left(\frac{\|H_k \|_2^2}{2 M}-1 \right)      \Bigg | \ge \eta   \right)
\end{aligned}
\\
\nonumber
&\begin{aligned}
\le
\mathbb{P} \Bigg( \Bigg |     &\sum_{k,m=1}^{K, M } \frac{\Delta(f)}{2M} (((H_k^r(m))^2-1) \\&+ ((H_k^i(m))^2-1)) \Bigg |\ge \eta   \Bigg)
\end{aligned}
\\
\label{eq:prob-error-2}
&\begin{aligned}
\le
2 \exp \left( -\frac{M \eta^2}{2 \Delta(f) \sigma_F^2 \eta + 8 \Delta(f)^2 K \sigma_F^4} \right)
\end{aligned}
\end{align}
Combining (\ref{eq:prob-error-2}), (\ref{eq: prob-error-1}), and (\ref{eq:error-event-4}) concludes the proof of Theorem \ref{th:approximation-of-functions}.

%%%%%%%%%%%%%%%%%%%%%%%%%%%%%%%%%%%%%%%%%%%%%%%%%%%%%%%%%%%%%%%%%%%
%
%.                                 APPENDIX
%
%%%%%%%%%%%%%%%%%%%%%%%%%%%%%%%%%%%%%%%%%%%%%%%%%%%%%%%%%%%%%%%%%%%
\begin{shownto}{arxiv}
\appendix
\subsection{Proof of Bernstein's Inequality, Theorem \ref{theorem:bernstein}}\label{appendix-bernstein}
The proof is along the lines of the proof of \cite[Theorem 1.5.2]{buldygin}. We carry out the changes that are necessary 
to pass from the statement involving the second moments in  \cite[Theorem 1.5.2]{buldygin}  to the sub-exponential bounds in Bernstein's inequality.
%%%%%%%%%%%%%%%%%%%%%%%%%%%%%%%%%%%%%%%%%%%%%%%%%%%
%
%
%
%%%%%%%%%%%%%%%%%%%%%%%%%%%%%%%%%%%%%%%%%%%%%%%%%
\begin{lemma}\label{lemma:app-1}
Let $X$ be a random variable with $\E(X)=0$ and $\subexpnorm{X}< +\infty$. For any $\lambda \in\reals$ with $| \lambda \subexpnorm{X} |<1$ we have
\begin{equation}
\E (\exp(\lambda X))\le 1+ |\lambda|^2 \subexpnorm{X}^2 \cdot \frac{1}{1- |\lambda \subexpnorm{X}|}.
\end{equation}
\end{lemma}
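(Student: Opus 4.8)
The plan is to expand the exponential in its power series, exchange expectation and summation, and then control the series using the moment bounds that are built into the definition of the sub-exponential norm.

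First I would record the elementary consequence of (\ref{eq:sub-exp-norm-def}): since $\subexpnorm{X}=\sup_{k\ge 1}(\E(|X|^k)/k!)^{1/k}$, for every integer $k\ge 1$ we have $\E(|X|^k)\le k!\,\subexpnorm{X}^k$. Before interchanging expectation and summation I would justify absolute integrability by Tonelli's theorem applied to the nonnegative terms:
\[
\E\bigl(\exp(|\lambda|\,|X|)\bigr)=\sum_{k=0}^\infty \frac{|\lambda|^k}{k!}\,\E(|X|^k)\le \sum_{k=0}^\infty \bigl(|\lambda|\,\subexpnorm{X}\bigr)^k=\frac{1}{1-|\lambda\subexpnorm{X}|}<\infty,
\]
where the geometric series converges precisely because $|\lambda\subexpnorm{X}|<1$. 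Since the partial sums of $\sum_k (\lambda X)^k/k!$ are dominated by the integrable function $\exp(|\lambda|\,|X|)$, dominated convergence lets me write $\E(\exp(\lambda X))=\sum_{k=0}^\infty \tfrac{\lambda^k}{k!}\E(X^k)$.

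Then I would evaluate the first two terms explicitly — the $k=0$ term is $1$ and the $k=1$ term is $\lambda\,\E(X)=0$ by hypothesis — and bound the remaining tail with the moment estimate and another geometric sum:
\[
\sum_{k=2}^\infty \frac{\lambda^k}{k!}\,\E(X^k)\le \sum_{k=2}^\infty \bigl(|\lambda|\,\subexpnorm{X}\bigr)^k=\frac{(|\lambda|\subexpnorm{X})^2}{1-|\lambda\subexpnorm{X}|}=|\lambda|^2\subexpnorm{X}^2\cdot\frac{1}{1-|\lambda\subexpnorm{X}|},
\]
which gives the claimed inequality. The only step that requires any care is the justification of the term-by-term integration; the rest is a direct computation. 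This lemma then serves as the moment-generating-function estimate in the usual Chernoff/exponential-Markov argument (optimizing over $\lambda$ over the allowed range) that will establish Bernstein's inequality, Theorem~\ref{theorem:bernstein}.
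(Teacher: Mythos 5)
Your proof is correct and follows essentially the same route as the paper's: power-series expansion of the exponential, the moment bound $\E(|X|^k)\le k!\,\subexpnorm{X}^k$ from the definition of $\subexpnorm{\cdot}$, vanishing of the $k=1$ term by centering, and summation of the resulting geometric series. Your explicit justification of the interchange of expectation and summation via domination by $\exp(|\lambda||X|)$ is a detail the paper passes over silently, but it adds rigor rather than changing the argument.
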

%%%%%%%%%%%%%%%%%%%%%%%%%%%%%%%%%%%%%%%%%%%%%%% 
\begin{proof} Let $\lambda \in \reals$ satisfy $| \lambda \subexpnorm{X} |<1$.
Then
\begin{IEEEeqnarray}{rCl}
\E (\exp(\lambda X))&=& 1+ \sum_{k=2}^{\infty} \frac{\lambda ^k \E(X^k)}{k!}\nonumber\\
&\le& 1+ \sum_{k=2}^{\infty} \frac{|\lambda| ^k \E(|X|^k)}{k!}\nonumber\\
&\le& 1+ \sum_{k=2}^{\infty} |\lambda|^k \subexpnorm{X}^k\nonumber\\
&=& 1+ |\lambda |^2 \subexpnorm{X}^2\left ( \sum_{k=0}^{\infty} |\lambda \subexpnorm{X}  |^k \right)\nonumber\\
&=& 1+ |\lambda|^2 \subexpnorm{X}^2 \cdot \frac{1}{1- |\lambda \subexpnorm{X}|},
\end{IEEEeqnarray}
where in the last line we have used $| \lambda \subexpnorm{X} |<1$.
\end{proof}
In the next lemma we derive an exponential bound depending on $\subexpnorm{X}$ on the moment generating function of the random variable $X$.
%%%%%%%%%%%%%%%%%%%%%%%%%%%%%%%%%%%%%%%%%%%%%%%%%
%
%
%
%%%%%%%%%%%%%%%%%%%%%%%%%%%%%%%%%%%%%%%%%%%%%%%%%%
\begin{lemma}\label{lemma:app-2}
Let $X$ be a random variable with $\E(X)=0$ and $\subexpnorm{X}< +\infty$. For any $c\in (0,1)$ and 
$\lambda \in \left (  -\frac{c}{\subexpnorm{X}} ,  \frac{c}{\subexpnorm{X}} \right)$ we have
\begin{equation}
\E (\exp(\lambda X))\le \exp\left(\frac{\lambda^2}{2}\frac{2 \cdot  \subexpnorm{X}^2}{1-c}\right).
\end{equation}
\end{lemma}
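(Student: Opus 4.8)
The plan is to read off the exponential bound directly from the polynomial bound in Lemma~\ref{lemma:app-1}. First I would dispose of the degenerate case $\subexpnorm{X}=0$: then $\E(|X|^k)=0$ for every $k\ge 1$, hence $X=0$ almost surely, $\E(\exp(\lambda X))=1$, and the claimed inequality holds for all $\lambda\in\reals$. So I may assume $\subexpnorm{X}>0$, which makes the interval $\left(-c/\subexpnorm{X},\,c/\subexpnorm{X}\right)$ well defined.

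Now fix $c\in(0,1)$ and $\lambda$ with $|\lambda|<c/\subexpnorm{X}$, so that $|\lambda\subexpnorm{X}|<c<1$. In particular the hypothesis $|\lambda\subexpnorm{X}|<1$ of Lemma~\ref{lemma:app-1} is satisfied, and it gives
\[
\E(\exp(\lambda X))\le 1+|\lambda|^2\subexpnorm{X}^2\cdot\frac{1}{1-|\lambda\subexpnorm{X}|}.
\]
Since $t\mapsto 1/(1-t)$ is increasing on $[0,1)$ and $|\lambda\subexpnorm{X}|<c$, I replace $1/(1-|\lambda\subexpnorm{X}|)$ by $1/(1-c)$ to obtain
\[
\E(\exp(\lambda X))\le 1+\lambda^2\subexpnorm{X}^2\cdot\frac{1}{1-c}.
\]
Applying the elementary inequality $1+x\le e^x$ with $x=\lambda^2\subexpnorm{X}^2/(1-c)\ge 0$ yields
\[
\E(\exp(\lambda X))\le \exp\!\left(\frac{\lambda^2\subexpnorm{X}^2}{1-c}\right)=\exp\!\left(\frac{\lambda^2}{2}\cdot\frac{2\subexpnorm{X}^2}{1-c}\right),
\]
which is exactly the asserted bound.

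There is no genuine obstacle in this lemma; it is a short chain of elementary estimates. The only two points deserving a moment's attention are checking that the hypothesis of Lemma~\ref{lemma:app-1} is met (immediate from $c<1$) and treating $\subexpnorm{X}=0$ separately so that dividing by $\subexpnorm{X}$ is legitimate. The substantive work was already done in Lemmas~\ref{lemma:app-1}; this step merely trades the polynomial tail factor $1/(1-|\lambda\subexpnorm{X}|)$ for a uniform constant $1/(1-c)$ valid on the whole interval and repackages it as an exponential, which is the form needed to run the Chernoff argument in the proof of Theorem~\ref{theorem:bernstein}.
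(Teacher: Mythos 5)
Your proof is correct and follows essentially the same route as the paper's: apply Lemma~\ref{lemma:app-1}, replace $1/(1-|\lambda\subexpnorm{X}|)$ by $1/(1-c)$, and finish with $1+x\le e^x$. The separate treatment of the degenerate case $\subexpnorm{X}=0$ is a harmless extra precaution that the paper omits.
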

%%%%%%%%%%%%%%%%%%%%%%%%%%%%%%%%%%%%%%%%%%%%%%%%%%%%
\begin{proof}
For $\lambda \in \left (  -\frac{c}{\subexpnorm{X}} ,  \frac{c}{\subexpnorm{X}} \right)$ we have
\begin{equation}\label{eq:app-1}
| \lambda \subexpnorm{X}    |< c<1,
\end{equation}
therefore by Lemma \ref{lemma:app-1}
\begin{IEEEeqnarray}{rCl}
\E (\exp(\lambda X))&\le& 1+ |\lambda|^2 \subexpnorm{X}^2 \cdot \frac{1}{1- |\lambda \subexpnorm{X}|}\nonumber\\
&\le&  1+ |\lambda|^2 \subexpnorm{X}^2 \cdot \frac{1}{1- c}\nonumber\\
&\le& \exp\left( \frac{\lambda^2}{2}\frac{2 \cdot \subexpnorm{X}^2}{1-c}     \right),
\end{IEEEeqnarray}
where in the second line we have used the first inequality in (\ref{eq:app-1})  and the last line is by the numerical inequality $1+x \le \exp (x)$ valid for $x\ge 0$.
\end{proof}
%%%%%%%%%%%%%%%%%%%%%%%%%%%%%%%%%%%%%%%%%%%%%%%%%%
%
%
%
%%%%%%%%%%%%%%%%%%%%%%%%%%%%%%%%%%%%%%%%%%%%%%%%%
\begin{lemma}\label{lemma:app-3}
Let $X_1, \ldots , X_M$ be independent random variables with $\E(X_i)=0$ and $\subexpnorm{X_i}<+\infty$, $i=1,\ldots ,M$. 
Let $L:=\max_{1\le i\le M} \subexpnorm{X_i}$, $c\in (0,1)$, and $\lambda \in \left (  -\frac{c}{L} ,  \frac{c}{L} \right)$.
Then for $S_M:= \sum_{i=1}^M X_i$ we have
\begin{equation}\label{eq:app-2}
\E (\exp(\lambda S_M))\le \exp \left (  \frac{\lambda^2}{2}\frac{2 \cdot \sum_{i=1}^M \subexpnorm{X_i}^2}{1-c}  \right).
\end{equation}
\end{lemma}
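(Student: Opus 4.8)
The plan is to exploit the independence of $X_1, \dots, X_M$ to factorize the moment generating function of $S_M$ and then apply Lemma~\ref{lemma:app-2} termwise. Since the $X_i$ are independent, so are the random variables $\exp(\lambda X_i)$, and hence
\[
\E(\exp(\lambda S_M)) = \E\Bigl(\prod_{i=1}^M \exp(\lambda X_i)\Bigr) = \prod_{i=1}^M \E(\exp(\lambda X_i)).
\]

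Next I would verify that the hypothesis of Lemma~\ref{lemma:app-2} is met for each factor. By definition of $L$ we have $\subexpnorm{X_i} \le L$, so the assumption $\lambda \in \left(-\tfrac{c}{L}, \tfrac{c}{L}\right)$ implies $\lambda \in \left(-\tfrac{c}{\subexpnorm{X_i}}, \tfrac{c}{\subexpnorm{X_i}}\right)$ whenever $\subexpnorm{X_i} > 0$. In the degenerate case $\subexpnorm{X_i} = 0$, the definition (\ref{eq:sub-exp-norm-def}) forces $\E|X_i|^k = 0$ for all $k \ge 1$, hence $X_i = 0$ almost surely and $\E(\exp(\lambda X_i)) = 1$, so the desired bound on that factor holds trivially. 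For the remaining indices, Lemma~\ref{lemma:app-2} with the same constant $c$ gives
\[
\E(\exp(\lambda X_i)) \le \exp\Bigl(\frac{\lambda^2}{2}\,\frac{2\,\subexpnorm{X_i}^2}{1-c}\Bigr).
\]

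Finally, I would multiply these bounds over $i = 1, \dots, M$ and use that a product of exponentials is the exponential of the sum of the exponents, obtaining
\[
\E(\exp(\lambda S_M)) \le \prod_{i=1}^M \exp\Bigl(\frac{\lambda^2}{2}\,\frac{2\,\subexpnorm{X_i}^2}{1-c}\Bigr) = \exp\Bigl(\frac{\lambda^2}{2}\,\frac{2\sum_{i=1}^M \subexpnorm{X_i}^2}{1-c}\Bigr),
\]
which is exactly (\ref{eq:app-2}). I do not anticipate any real obstacle here: the argument is a routine combination of independence with the single-variable estimate already established in Lemma~\ref{lemma:app-2}, and the only point deserving a word of care is the bookkeeping ensuring $\lambda$ lies in the admissible interval for every term, which is immediate from $\subexpnorm{X_i} \le L$.
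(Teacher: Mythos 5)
Your argument is correct and matches the paper's proof exactly: factorize the moment generating function by independence and apply Lemma~\ref{lemma:app-2} to each factor, noting that $\subexpnorm{X_i}\le L$ keeps $\lambda$ in the admissible interval. The extra care you take with the degenerate case $\subexpnorm{X_i}=0$ is a nice touch the paper leaves implicit.
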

%%%%%%%%%%%%%%%%%%%%%%%%%%%%%%%%%%%%%%%%%%%%%
\begin{proof}
By independence of $X_1, \ldots , X_M$ we have
\begin{equation}
\E (\exp(\lambda S_M))= \prod_{i=1}^M \E (\exp(\lambda X_i)).
\end{equation}
Combining this with Lemma \ref{lemma:app-2} proves the claim of the lemma.
\end{proof}
%%%%%%%%%%%%%%%%%%%%%%%%%%%%%%%%%%%%%%%%%%%%%%
The next lemma establishes the basic tail bound for random variables satisfying inequalities of type (\ref{eq:app-2}).
The proof can be found in \cite[Lemma 1.4.1]{buldygin}.
%%%%%%%%%%%%%%%%%%%%%%%%%%%%%%%%%%%%%%%%%%%%%%%%%%
%
%
%
%%%%%%%%%%%%%%%%%%%%%%%%%%%%%%%%%%%%%%%%%%%%%%%
\begin{lemma}\label{lemma:app-4}
Let $X$ be a random variable with $\E(X)=0$. If there exist $\tau\ge 0$ and $\Lambda >0$ such that
\begin{equation}
\E (\exp(\lambda X))\le \exp\left(\frac{\lambda^2}{2}\tau^2\right),
\end{equation}
holds for all $\lambda\in (-\Lambda, \Lambda)$,
then for any $t\ge 0$ we have
\begin{equation}
\mathbb{P}(|X |\ge t)\le 2 \cdot Q(t),
\end{equation}
where
\begin{equation}
Q(t)=  
\begin{cases}   \exp \left (- \frac{t^2}{2 \tau^2}  \right ), & 0< t\le \Lambda \tau^2 \\
         \exp\left (     - \frac{\Lambda t}{2}    \right), & \Lambda  \tau^2\le t .
\end{cases}
\end{equation}
\end{lemma}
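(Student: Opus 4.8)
The plan is to prove the bound by the classical exponential Markov (Chernoff) method, optimizing the free parameter $\lambda$ over the admissible open interval $(-\Lambda,\Lambda)$ and treating the two regimes of $t$ separately, and finally symmetrizing to pass from a one-sided to a two-sided tail bound. I may assume $t>0$, since for $t=0$ the claim is trivial as $\mathbb{P}(|X|\ge 0)=1\le 2$.

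First I would bound the upper tail. For any $\lambda\in(0,\Lambda)$, applying Markov's inequality to $\exp(\lambda X)$ and then the hypothesis gives
\begin{equation}
\mathbb{P}(X\ge t)\le e^{-\lambda t}\,\E(\exp(\lambda X))\le \exp\left(-\lambda t+\frac{\lambda^2\tau^2}{2}\right).
\end{equation}
The exponent $g(\lambda):=-\lambda t+\lambda^2\tau^2/2$ is a convex parabola whose unconstrained minimizer is $\lambda^\ast=t/\tau^2$, so I would distinguish cases according to whether $\lambda^\ast$ is admissible.

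In the regime $0<t\le\Lambda\tau^2$ we have $\lambda^\ast=t/\tau^2\le\Lambda$, so $\lambda^\ast\in(0,\Lambda)$ is a valid choice; substituting it yields $g(\lambda^\ast)=-t^2/(2\tau^2)$ and hence the first branch $\mathbb{P}(X\ge t)\le\exp(-t^2/(2\tau^2))$. In the regime $t\ge\Lambda\tau^2$ the minimizer satisfies $\lambda^\ast\ge\Lambda$ and thus lies outside the open interval; since $g$ is strictly decreasing on $(0,\lambda^\ast)\supseteq(0,\Lambda)$, I would instead take the infimum of the right-hand side over $\lambda\in(0,\Lambda)$, which by continuity equals the boundary value $\exp(g(\Lambda))=\exp(-\Lambda t+\Lambda^2\tau^2/2)$. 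Using $t\ge\Lambda\tau^2$ to estimate $\Lambda^2\tau^2/2\le\Lambda t/2$ then gives $g(\Lambda)\le-\Lambda t/2$, which is the second branch $\mathbb{P}(X\ge t)\le\exp(-\Lambda t/2)$. The two regimes together establish $\mathbb{P}(X\ge t)\le Q(t)$.

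Finally I would symmetrize. The variable $-X$ is also centered and satisfies the identical moment-generating-function bound, since $\E\exp(\lambda(-X))=\E\exp((-\lambda)X)$ and the hypothesis is even in $\lambda$ with the admissible set symmetric about the origin; hence the argument above applied verbatim gives $\mathbb{P}(-X\ge t)\le Q(t)$. Combining the two one-sided estimates with the union bound $\mathbb{P}(|X|\ge t)\le\mathbb{P}(X\ge t)+\mathbb{P}(-X\ge t)$ produces the factor of $2$ and completes the proof. The only subtle point is the case $t\ge\Lambda\tau^2$, where the unconstrained optimizer falls outside $(-\Lambda,\Lambda)$: one cannot substitute $\lambda=\Lambda$ directly into the hypothesis but must instead pass to the infimum over the open interval and invoke continuity of $g$, after which matching the boundary value to the stated bound $\exp(-\Lambda t/2)$ is routine.
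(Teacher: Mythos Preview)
Your argument is correct and is exactly the standard Chernoff optimization that underlies this lemma; the paper itself does not spell out a proof but simply refers to \cite[Lemma 1.4.1]{buldygin}, where the same method is used. One tiny quibble: at the boundary $t=\Lambda\tau^2$ your claim that $\lambda^\ast\in(0,\Lambda)$ is not literally true (you get $\lambda^\ast=\Lambda$), but since the two branches of $Q(t)$ coincide there and your second-regime argument covers it, nothing is lost.
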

%%%%%%%%%%%%%%%%%%%%%%%%%%%%%%%%%%%%%%%%%%%%%%%
Lemma \ref{lemma:app-4} and Lemma \ref{lemma:app-3} immediately imply the following tail inequality for sums of independent sub-exponential random variables.
%%%%%%%%%%%%%%%%%%%%%%%%%%%%%%%%%%%%%%%%%%%%%%%
%
%
%
%%%%%%%%%%%%%%%%%%%%%%%%%%%%%%%%%%%%%%%%%%%%%%%%
\begin{lemma}\label{lemma:app-5}
Let $X_1, \ldots , X_M$ be independent random variables with $\E(X_i)=0$ and $\subexpnorm{X_i}<+\infty$ for $k=1,\ldots, M$.
Let $L:= \max_{1\le i \le M} \subexpnorm{X_i}$, $c\in (0,1)$, and $S_M:=\sum_{i=1}^M X_i$.\\
Then for any $t\ge 0$
\begin{equation}
\mathbb{P}(|S_M    |\ge t)\le 2 \cdot V(c,t),
\end{equation}
where
\begin{equation}\label{eq:app-3}
V(c,t)=
\begin{cases}  \exp \left (- \frac{t^2 (1-c)}{4  \sum_{i=1}^M \subexpnorm{X_i}^2}  \right ), & 0< t\le \frac{2 c \sum_{i=1}^M \subexpnorm{X_i}^2}{L (1-c)} \\
                          \exp\left (     - \frac{c t}{2 L}    \right), &   \frac{2 c \sum_{i=1}^M \subexpnorm{X_i}^2}{L (1-c)}  \le t .
\end{cases}
\end{equation}
\end{lemma}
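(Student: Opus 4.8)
The plan is to obtain Lemma~\ref{lemma:app-5} as an immediate consequence of Lemma~\ref{lemma:app-3} and Lemma~\ref{lemma:app-4}, with no new work beyond bookkeeping. First I would note that independence and centeredness of $X_1,\dots,X_M$ give $\E(S_M)=\sum_{i=1}^M\E(X_i)=0$, so $S_M$ satisfies the hypothesis $\E(X)=0$ required by Lemma~\ref{lemma:app-4}.

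Next I would invoke Lemma~\ref{lemma:app-3} with the given $c\in(0,1)$: for every $\lambda$ in the interval $\left(-\tfrac{c}{L},\tfrac{c}{L}\right)$ we have
\[
\E\!\left(\exp(\lambda S_M)\right)\le \exp\!\left(\frac{\lambda^2}{2}\cdot\frac{2\sum_{i=1}^M\subexpnorm{X_i}^2}{1-c}\right).
\]
This is exactly the exponential moment bound assumed in Lemma~\ref{lemma:app-4}, with the identifications
\[
\tau^2:=\frac{2\sum_{i=1}^M\subexpnorm{X_i}^2}{1-c},\qquad \Lambda:=\frac{c}{L},
\]
both of which are finite, and strictly positive in the nondegenerate case $L>0$ (the case $L=0$, i.e.\ all $X_i$ almost surely equal to their mean, is vacuous after identifying a.e.-equal random variables, since then $S_M=0$ a.s.).

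Finally I would apply Lemma~\ref{lemma:app-4} with these parameters and simplify the resulting $Q(t)$. A direct substitution gives
\[
\frac{t^2}{2\tau^2}=\frac{t^2(1-c)}{4\sum_{i=1}^M\subexpnorm{X_i}^2},\qquad
\frac{\Lambda t}{2}=\frac{ct}{2L},\qquad
\Lambda\tau^2=\frac{2c\sum_{i=1}^M\subexpnorm{X_i}^2}{L(1-c)},
\]
so that $Q(t)=V(c,t)$ term by term and breakpoint by breakpoint, and the bound $\mathbb{P}(|S_M|\ge t)\le 2Q(t)$ from Lemma~\ref{lemma:app-4} becomes the claimed $\mathbb{P}(|S_M|\ge t)\le 2V(c,t)$. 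There is no real obstacle: the argument is purely a matching of parameters between the two lemmas, and the only point that needs a word of care is the harmless degenerate case $L=0$.
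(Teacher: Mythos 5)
Your proof is correct and follows exactly the route the paper intends: the paper itself states that Lemma~\ref{lemma:app-3} and Lemma~\ref{lemma:app-4} ``immediately imply'' Lemma~\ref{lemma:app-5}, and your parameter matching $\tau^2=\tfrac{2\sum_i\subexpnorm{X_i}^2}{1-c}$, $\Lambda=\tfrac{c}{L}$ is precisely the intended verification. The explicit check of the breakpoint $\Lambda\tau^2$ and the remark on the degenerate case $L=0$ are fine additions but not departures from the paper's argument.
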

%%%%%%%%%%%%%%%%%%%%%%%%%%%%%%%%%%%%%%%%%%%%%%
The proof of Theorem \ref{theorem:bernstein} follows from Lemma \ref{lemma:app-5}. To this end we observe that for fixed $t\ge 0$ the lower expression in (\ref{eq:app-3}) is decreasing in $c$ while the upper expression in (\ref{eq:app-3}) is increasing in $c$.
The $c'$ that minimizes the $V(c',t)$ for fixed $t\ge 0$ can be found by solving the equation
\begin{equation}
 \frac{2 c' \sum_{i=1}^M \subexpnorm{X_i}^2}{L (1-c')} =  t,
\end{equation}
resulting in 
\begin{equation}\label{eq:app-4}
c'=\frac{L t}{L t + 2 \sum_{i=1}^M \subexpnorm{X_i}^2}.
\end{equation}
Inserting the expression for $c'$ given in (\ref{eq:app-4}) into $V(c',t)$ and using the representation (\ref{eq:app-3}) we obtain
\begin{equation}\label{eq:app-5}
\mathbb{P}(   |S_M    |\ge t ) \le 2 \cdot \exp\left (- \frac{t^2}{2 (L t + 2 M L^2)}     \right ),
\end{equation}
where we have used
\begin{equation}
\sum_{i=1}^M \subexpnorm{X_i}^2\le M L^2.
\end{equation}

\end{shownto} % shownto arxiv

\bibliography{references}
\bibliographystyle{plain}

\end{document}